\documentclass[lettersize,journal]{IEEEtran}
\usepackage{settings}

\begin{document}

\title{Linear decomposition of approximate multi-controlled single qubit gates}
\author{
        Jefferson D. S. Silva\orcidlink{0000-0002-2063-6140}, 
        Thiago Melo D. Azevedo\orcidlink{0000-0002-1068-1618},
        Israel F. Araujo\orcidlink{0000-0002-0308-8701},
        Adenilton J. da Silva\orcidlink{0000-0003-0019-7694}%
\thanks{This work is based upon research supported by Conselho Nacional de Desenvolvimento Científico e Tecnológico - Brasil (CNPq) - Grant No. 409506/2022-2, No. 409513/2022-9 and No. 162052/2021-9. This study was financed in part by the Coordenação de Aperfeiçoamento de Pessoal de Nível Superior – Brazil (CAPES) – Finance Code 001, FACEPE (Grant No. APQ-1229-1.03/21), National Research Foundation of Korea (Grant No. 2022M3E4A1074591), and Yonsei University Research Fund of 2023 (2023-22-0072). T.M.D. Azevedo and J.D.S. Silva contributed equally to this work. Corresponding author: Adenilton J. da Silva}
\thanks{
T.M.D. Azevedo, J.D.S. Silva and A.J. da Silva are with Centro de Informática, Universidade Federal de Pernambuco, Recife, Pernambuco, 50.740-560, Brazil (e-mail: jdss2@cin.ufpe.br; tmda@cin.ufpe.br; ajsilva@cin.ufpe.br).}
\thanks{
I.F. Araujo is with the Department of Statistics and Data Science, Yonsei University, Seoul, Republic of Korea (e-mail: ifa@yonsei.ac.kr).}
}

\date{July 2023}

\maketitle

\begin{abstract}
    We provide a method for compiling approximate multi-controlled single qubit gates into quantum circuits without ancilla qubits. The total number of elementary gates to decompose an n-qubit multi-controlled gate is proportional to 32n, and the previous best approximate approach without auxiliary qubits requires 32nk elementary operations, where k is a function that depends on the error threshold. The proposed decomposition depends on an optimization technique that minimizes the CNOT gate count for multi-target and multi-controlled CNOT and SU(2) gates. Computational experiments show the reduction in the number of CNOT gates to apply multi-controlled U(2) gates. As multi-controlled single-qubit gates serve as fundamental components of quantum algorithms, the proposed decomposition offers a comprehensive solution that can significantly decrease the count of elementary operations employed in quantum computing applications.
\end{abstract}

\begin{IEEEkeywords}
Multi-Controlled Quantum Gates, Quantum Circuits, Quantum Compilation, Circuit Design, Approximated Quantum Gates
\end{IEEEkeywords}

\section{Introduction}
\IEEEPARstart{N}{oisy} Intermediate-Scale Quantum (NISQ)~\cite{preskill2018quantum} devices are currently in development. However, these devices are not universal quantum computers as they can only execute short-depth quantum circuits. A more efficient quantum device is crucial for practical quantum computing applications~\cite{kim2023evidence}. Quantum computation comprises both hardware and software components. One approach to enhance the efficiency of quantum devices is through improvements in the system software, enabling more efficient execution of quantum algorithms. Optimizing the system software of quantum devices is the primary focus of several studies, including error mitigation~\cite{kim2023evidence}, quantum state preparation~\cite{araujo2021divide, zhang2021low}, and the decomposition of unitary gates into quantum circuits~\cite{shende2006synthesis, iten2016quantum, malvetti2021quantum}.

The task of programming quantum computers involves breaking down quantum gates into sequences of elementary gates~\cite{barenco_1995,shende2006synthesis, iten2016quantum}. These elementary gates must encompass a universal set of quantum operations, such as single-qubit and CNOT gates~\cite{nielsen2010quantum}. Decomposing quantum gates into hardware-compatible elementary gates is a complex process that requires meticulous design and optimization~\cite{brugiere2021reducing, bae2020quantum, cuomo2023optimized, leymann2020bitter}. This research aims to minimize the number of elementary quantum operations essential for implementing an approximate multi-controlled quantum gate without auxiliary qubits.

The initial algorithm for decomposing multi-controlled $U(2)$ single-qubit gates without auxiliary qubits was proposed in Ref.~\cite{barenco_1995} and exhibited a growth in the number of elementary quantum operations and circuit depth that was quadratic about the number of controls.  Subsequently, Ref.~\cite{adenilton2022linear} proposed a decomposition with a quadratic number of elementary operations and linear depth.

While multi-controlled $SU(2)$ gates can be linearly decomposed into elementary gates, achieving a linear decomposition without auxiliary qubits for multi-controlled general $U(2)$ quantum gates remains an open problem. Numerous techniques make use of ancilla qubits to decrease the quantity of elementary quantum operations required for implementing a quantum gate~\cite{he2017decompositions, 10.1007/978-3-031-08760-8_16, iten2016quantum, barenco_1995}. Another approach is to introduce an approximation, up to an error $\epsilon$, for the multi-controlled $U(2)$ gates, which allows for decompositions with a linear number of basic operations and depth \cite{barenco_1995}.

This study introduces an alternative decomposition for constructing an approximate $n$-qubit controlled $U(2)$ gate using a linear number of elementary gates. Similarly to Ref.~\cite{barenco_1995}, the proposed method also includes an adjustable error parameter $\epsilon$, which can be optimized to either limit the circuit depth or adapt to the device noise level. Specifically, for this study, the error parameter is chosen to be upper-bounded by the order of magnitude of the error induced by a single CNOT gate in a real-world NISQ device. The decomposition proposed in this work performs better than the approximate decomposition from Ref.~\cite{barenco_1995}. 

Furthermore, this research introduces an optimization technique for implementing multi-controlled $X$ and multi-controlled $SU(2)$ gates with multiple targets (multi-controlled gates that share the same control qubits but have distinct target qubits). The optimization method effectively reduces the number of CNOT gates to apply multi-target gates. Consequently, this optimization reduces the CNOT gate count associated with the multi-controlled $U(2)$ gate approximation.

\begin{figure*}[t]
    \centering
    \includegraphics[width=.9\linewidth]{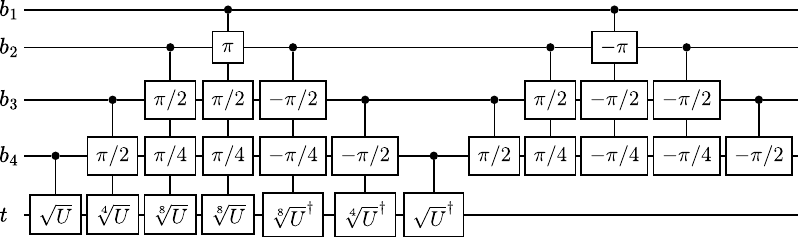}
    \caption{Decomposition of a multi-controlled $U(2)$ with four control qubits $\{ {b_1},  {b_2},  {b_3},  {b_4} \}$, and one target qubit $t$. Gates with angles $\pm \pi/2^k, k\in \mathbb{N}$, represent $R_x$ gates, and the angle is used as an argument, i.e., $R_x(\pm \pi/2^k)$.}
    \label{fig:adenilton_2022}
\end{figure*}

The rest of this work is structured as follows. Section~\ref{sec:relatedwork} provides an overview of related research in the field, focusing on multi-controlled gate decompositions without auxiliary qubits. Section~\ref{sec:proposedmethod} presents the main results: i) a linear approximation of multi-controlled quantum gates without auxiliary qubits and ii) an optimization approach for sequential multi-controlled gates featuring distinct target qubits, resulting in a reduction in the number of CNOT gates required by the approximated multi-controlled quantum gate decomposition. Section~\ref{sec:experiments} showcases experiments to validate the theoretical results, while Section~\ref{sec:conclusion} concludes the paper and hints at potential future research directions.

\section{Related Work}
\label{sec:relatedwork}

We start this section with a decomposition of multi-controlled $U(2)$ gates~\cite{adenilton2022linear}, followed by a decomposition of multi-controlled $SU(2)$ gates~\cite{adenilton_su2_2023} that will be used through this work. After that, we review an approximate decomposition of multi-controlled $U(2)$ gates from Ref.~\cite{barenco_1995} that will be compared to this work.

\subsection{Decomposition of multi-controlled U(2) gates}
\label{sec: original_linearu2}

Ref.~\cite{saeedi2013linear} proposed a method to decompose a multi-controlled $R_x$ gate with linear depth, and Ref.~\cite{adenilton2022linear} extended the method by replacing the $R_x$ gate, which operates on the target qubit, with an arbitrary operator $U\in U(2)$. This extension results in a method for decomposing $n$-qubit controlled $U$ gates with linear depth. However, this method still requires a quadratic number of CNOT gates.

The decomposition technique relies on the utilization of single-controlled $R_x$ gates and $2^j$-th roots of the $U$ gate, denoted as $U^{1/2^j}$, where $j\in \{1, 2, \dots, \nctrl-1 \}$ and $\nctrl$ represents the number of control qubits $\{ b_1, b_2, \dots, b_{n-1} \}$, where $\nctrl=n-1$ and the circuit features a single target qubit ($t=b_n$).

The multi-controlled circuit exhibits symmetry such that if any of the control qubits is 0, the applied gate is equivalent to the identity gate. Refer to \figref{adenilton_2022} for a visual representation of the circuit used for decomposing $C^3 U$, which involves three control qubits and one target qubit. The gates marked with angles of $\nicefrac{\pi}{j}$ represent $R_x$ gates with the respective angle as their argument (i.e., $R_x(\nicefrac{\pi}{j})$).

To precisely determine the CNOT count of the circuit, we can examine the number of controlled operators within the first and second triangles of the circuit. In the first triangle, the number of controlled gates is $\sum_{j=0}^{n-2} (1+2j)$, and in the second triangle, the summation becomes $\sum_{j=0}^{n-3} (1+2j)$. The sum of these values leads to $(n-1)^2+(n-2)^2$ controlled operators. Since each controlled $U(2)$ gate necessitates 2 CNOTs~\cite[Corollary 5.3]{barenco_1995}, the overall number of CNOT operations is $4n^2-12n+10$.

\subsection{Decomposition of multi-controlled SU(2) gates}

\begin{figure}[htb!]
    \centering
    \includegraphics[width=\linewidth]{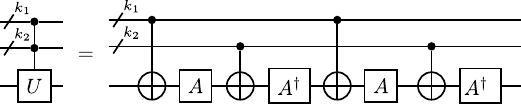}
    \caption{Decomposition scheme based  for multi-controlled $SU(2)$ gates, where $U, A \in SU(2)$, and $U$ has at least one real-valued diagonal.}  \label{fig:rafaella_circ1}
\end{figure}

For $SU(2)$ gates, various decompositions with a linear CNOT count have been proposed \cite{barenco_1995, iten2016quantum, adenilton_su2_2023}. Among these, Ref.~\cite{adenilton_su2_2023} is the most efficient approach that decomposes an $n$-qubit multi-controlled $SU(2)$ gate with a maximum of $20n-38$ ($20n-42$ for even $n$) CNOTs, where $n \geq 3$.

The decomposition of $SU(2)$ gates employs the circuit depicted in \figref{rafaella_circ1}. This allows the implementation of any unitary gate $U \in SU(2)$ with a real secondary diagonal with $16n-40$ CNOTs. By incorporating a Hadamard gate before and after the decomposed unitary gate, the circuit transforms into one capable of decomposing any gate with a real main diagonal. The same method decomposes multi-controlled SO(2) gates.

The decomposition of a general $SU(2)$ gate leverages the eigendecomposition of $U = QDQ^{-1}$, where $D$ is a diagonal matrix, and $Q$ is constructed from the eigenvectors of $U$. Given that $D$ is a diagonal matrix, it has at least one real diagonal. With eigenvectors phase selection, the matrix $Q$ will have a real main diagonal. Further optimizations enable the decomposition of a general $n$-qubit multi-controlled $SU(2)$ gate with a maximum of $20n-38$ ($20n-42$ for even $n$), where $n \geq 3$.

The $SU(2)$ decomposition presented here incorporates Lemma 7.2 from \cite{barenco_1995}, which outlines a decomposition technique for multi-controlled $X$ gates, $C^\nctrl X$. This method applies to a quantum circuit featuring $n \geq 5$ qubits and $\nctrl \in \{3, \dots, \lceil n/2 \rceil \}$, adopting a form illustrated in \figref{toffoli-chain-decomposition}, where the gates with the target $\oplus^{\bullet}$ represent the decomposition with relative phase~\cite{maslov2016advantages}, which is implemented using 3 CNOTs. In this work, we represent them as $C^2 X^{\bullet}$. 

\begin{figure}[htb!]
    \centering
    \includegraphics[width=\linewidth]{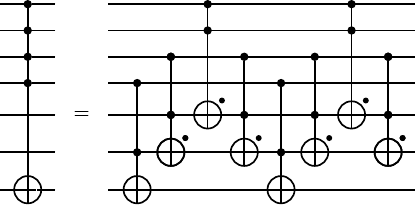}
    \caption{Decomposition of a $4$-controlled $X$ gate in a $7$-qubit system with two dirty auxiliary qubits \cite[Lemma 7.2]{barenco_1995}. In decomposition, the gates with targets $\oplus^{\bullet}$ represent the decomposition with relative phase~\cite[Fig 4]{adenilton_su2_2023}.}
    \label{fig:toffoli-chain-decomposition}
\end{figure}

\subsection{Approximate decomposition of multi-controlled U(2) gates}

\begin{figure}
    \centering
    \includegraphics[width=\linewidth]{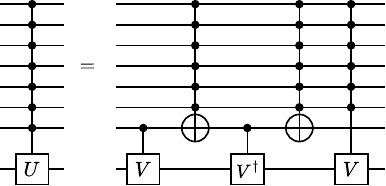}
    \caption{Lemma 7.5 from Ref.~\cite{barenco_1995}: Decomposition of a $(n-1)$-controllled $U(2)$ gate with $V^2=U$. It can be used recursively for the $(n-2)$-controlled $V$ gate, which leads to $O(n^2)$ elementary operations.}
    \label{fig:barenco_rec}
\end{figure}

Ref.~\cite[Lemma 7.8]{barenco_1995} proposed an approximate decomposition for $U(2)$ gates, up to an error $\epsilon$, by using the Lemma 7.5 of the same article, shown in \figref{barenco_rec}, recursively. The recursion stops after $k=\log_2(1/\epsilon)$ levels. Here, $\epsilon=\| C^{n-k-1}V_k - I\|$ and $V_j$ is the matrix used for the decomposition in the $j$-th level, where $V_j^2=V_{j-1}$. Ref.~\cite[Theorem 4]{iten2016quantum} makes the CNOT count analysis of the decomposition after applying some optimizations, where the CNOT cost of a $(n-1)$-controlled gate $U$ is given as a function of the CNOT cost of a $(n-2)$-controlled gate $V$ in addition to the two $C^{n-2}X$, which have a cost of $16n-40$ CNOTs each, and the two single-controlled gates that require $2$ CNOTs each. This leads to the following recurrence relation: 
\begin{equation}
    N_{\textnormal{CNOT}}(n-1) = 32n-76+N_{\textnormal{CNOT}}(n-2)
\end{equation}

Terminating this relation after the $k$-th level, i.e. when $C^{n-k-1}V_k\approx I$ and $N_{\textnormal{CNOT}}(n-k-1)=0$ leads us to the CNOT count of the decomposition of a $(n-1)$-controlled approximate gate:

\begin{equation}
     N_{\textnormal{CNOT}}(n-1) = -16k^2-60k+32nk
\end{equation}

Since $k$ depends solely on $\epsilon$, which is fixed, $k$ does not scale with $n$, and the decomposition has a linear number of CNOTs and a linear depth.

\section[Linear multi-controlled U(2) gates]{Linear multi-controlled $\mathrm{U(2)}$ gates}

\label{sec:proposedmethod}

This section presents a method for approximating the decomposition of an $\nctrl$-controlled quantum gate $C^\nctrl U$, where $U \in U(2)$. This approximation enables us to decompose multi-controlled quantum gates without auxiliary qubits using a linear number of elementary gates. To achieve this result, we leverage the limit $\lim_{\nctrl \to \infty} U^{1/2^{\nctrl-1}} = I$ and select a value for $\nctrl$ such that $U^{1/2^{\nctrl-1}}$ approximates $I$ up to an error $\epsilon$. 

\subsection[Approximating the N-th root of U]{Approximating the $N$-th root of $U$}

Let $U$ be a matrix in the unitary group $U(2)$ and $N$ a positive integer. The matrix $U^{1/N}$ approximates the identity matrix. Given an error $\epsilon$, we want to determine the value of $n_c = \log_2(N)$ such that $\|U^{1/{2^\nctrl-1}}-I\| \leq \epsilon$. 

\begin{lemma}\label{theorem: minimum_k}   For a given error $\epsilon$, $\|U^{1/2^{\nctrl-1}}-I\| \leq \epsilon$ if $\nctrl\geq \log_2 \left( \theta/{\arccos\left(1-\nicefrac{\epsilon^2}{2}\right)}\right)$.
 \end{lemma}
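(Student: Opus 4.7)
The plan is to reduce the claim to a scalar trigonometric inequality via the spectral decomposition of $U$. Since $U\in U(2)$ is unitary, write $U=QDQ^{-1}$ with $Q$ unitary and $D=\operatorname{diag}(e^{i\theta_1},e^{i\theta_2})$. Unitary similarity preserves the spectral norm, so for $N=2^{\nctrl-1}$,
\begin{equation}
\bigl\|U^{1/N}-I\bigr\|=\bigl\|Q(D^{1/N}-I)Q^{-1}\bigr\|=\max_{j\in\{1,2\}}\bigl|e^{i\theta_j/N}-1\bigr|.
\end{equation}
Let $\theta$ denote the largest of $|\theta_1|,|\theta_2|$ (the quantity named $\theta$ in the statement); the principal $N$-th root then realizes the bound with $\phi=\theta/N$.

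Next I would turn $|e^{i\phi}-1|\le\epsilon$ into a bound on $\phi$. Squaring gives
\begin{equation}
|e^{i\phi}-1|^{2}=2-2\cos\phi\le\epsilon^{2}\;\Longleftrightarrow\;\cos\phi\ge 1-\tfrac{\epsilon^{2}}{2},
\end{equation}
which, for $\phi\in[0,\pi]$ and $\epsilon\le 2$, is equivalent to $\phi\le\arccos(1-\epsilon^{2}/2)$. Substituting $\phi=\theta/N$ and solving gives $N\ge\theta/\arccos(1-\epsilon^{2}/2)$. Taking $\log_2$ and recalling $N=2^{\nctrl-1}$ yields the stated threshold for $\nctrl$.

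The two small things to be careful about are (i) choosing the principal branch of the root so that $e^{i\theta_j/N}$ is the eigenvalue realized by $U^{1/N}$, and (ii) ensuring $\arccos(1-\epsilon^{2}/2)$ is well-defined, which is fine for any reasonable error $\epsilon\in(0,2]$. Everything else is routine: the diagonalization step is standard for $U(2)$ matrices, the norm reduction uses unitary invariance of the spectral norm, and the final step is an algebraic rearrangement. I expect the main (minor) obstacle to be pinning down the precise definition of $\theta$ used by the authors and verifying that the off-by-one factor inside the exponent $2^{\nctrl-1}$ is handled consistently with their convention; once that bookkeeping is fixed, the estimate follows directly from the eigenvalue argument above.
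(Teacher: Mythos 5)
Your proof follows essentially the same route as the paper's: diagonalize $U$, reduce the matrix bound to the scalar inequality $|e^{i\theta/N}-1|^{2}=2-2\cos(\theta/N)\leq\epsilon^{2}$, and invert with $\arccos$ to solve for $N$. You also correctly flag the off-by-one bookkeeping: since $N=2^{\nctrl-1}$, the threshold should carry a $+1$ (as the paper's own proof in fact concludes), which the lemma statement as printed omits.
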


\begin{proof}
    Since $U$ is unitary, we can write $U = VDV^\dagger$, where $V$ is unitary and $D$ is diagonal. 
    Now consider $U^{1/N} = (VDV^\dagger)^{1/N}=VD^{1/N}V^\dagger$. 
    The diagonal elements of D are the eigenvalues ($e^{i\theta_1}$ and $e^{i\theta_2}$) of the matrix $U$. Eq.~\eqref{eq:n_root_D} describes the $N$-th root of $D$.
    \begin{equation}\label{eq:n_root_D}
        D^{1/N} =
        \begin{pmatrix}
            e^{i\theta_1/N} & 0 \\ 
            0 & e^{i\theta_2/N}
        \end{pmatrix}.
    \end{equation}
    
    As $N$ becomes large, $\theta_j/N$ approaches 0. Therefore, $e^{i\theta_1/N} \rightarrow 1$ and $e^{i\theta_2/N} \rightarrow 1$, which implies that $D^{1/N} \approx I$. Since $V$ is unitary, we know that $VV^\dagger = I$, we obtain the approximation in Eq.~\ref{eq:approx} when $N \gg 1$:
    \begin{equation}\label{eq:approx}
    U^{1/N} = (VDV^\dagger)^{1/N} \approx VV^\dagger = I.
    \end{equation}
    
    The error of this approximation is defined in Eq.~\ref{eq:error_approx}, where $\errorfunc$ is the error associated with the approximation (and $\epsilon$ is the error or tolerance we desire to obtain with the approximation). 
    
    \begin{equation}\label{eq:error_approx}
        \errorfunc \coloneqq \| U^{1/N}-I\|.
    \end{equation}
    
     For this work, the chosen metric $\| \cdot \|$ is the maximum absolute value of all the elements on a matrix, 
     \begin{equation}\label{eq:metric}
         \| U \|_{max} \coloneqq \max_{i,j} |U_{ij}|.
     \end{equation}
    
    In the diagonal basis
    \begin{equation}
        U^{1/N}-I = 
        \begin{pmatrix}
            e^{i\theta_1/N}-1 & 0 \\ 
            0 & e^{i\theta_2/N}-1
        \end{pmatrix},
    \end{equation}
    where $-\pi< \theta_1, \theta_2 \leq \pi$. The error $\errorfunc$ is given by the maximum absolute value of all the eigenvalues of the resulting matrix. In that way, it can be written as:
    \begin{equation}
    \errorfunc = | \lambda-1 |,
    \end{equation}
    where $\lambda$ is the eigenvalue of $U^{1/N}$ that maximizes $\errorfunc$. Thus, $\lambda$ is a complex value that can be expressed as $\lambda=\lambda_r + i\lambda_i$ and since $U^{1/N}$ is unitary, its eigenvalues must have an absolute value equal to 1. Therefore, $\lambda_i^2 = 1 - \lambda_r^2$, which results in:
    
    \begin{equation}\label{eq:erro2}
        \errorfunc^2 =  (\lambda_r -1)^2 + 1 - \lambda_r^2 = 2(1-\lambda_r).
    \end{equation}
     Since $\lambda= e^{i \theta/N}$, $-\pi < \theta \leq \pi$, then $\lambda_r= \cos(\theta/N)$ and $\errorfunc^2 = 2(1-\cos(\theta/N))$.
    We can then write the error as:
    \begin{equation}
        \errorfunc = \sqrt{2(1-\cos(\theta/N))}.
    \end{equation}
    Therefore, $N$ can be written as a function of $\errorfunc$:
    \begin{equation}\label{eq:base_error}
        N =  \frac{|\theta|}{ \arccos(1-\frac{\errorfunc^2}{2})}.
    \end{equation}
    As $N = 2^{\nctrl-1}$, if $\nctrl\geq \log_2 \left( |\theta|/{\arccos\left(1-\nicefrac{\epsilon^2}{2}\right)}\right)+1$, then the error of the approximation is $\errorfunc\leq\epsilon$. 
\end{proof}

The approximation error can be set arbitrarily small by increasing $\nctrl$, while $\theta$ depends on the original $U$ operator. Given a fixed $\epsilon$, the worst case scenario is when $\theta = \pi$, which requires a higher value for $\nctrl$. For example, the IBM Mumbai quantum device~\cite{ibm_mumbai} has a median CNOT error of $8.255e-3$. Therefore, $\nctrl$ can be chosen to be sufficiently large so that the error of the approximation is smaller than the median CNOT error of actual quantum devices. For $\theta=\pi$, $U=X$, we can achieve a tolerance equal to or lower than $1e-3$ with $\nctrl=13$.

\subsection{Linear decomposition of multi-controlled gates}
Using the decomposition from Section~\ref{sec: original_linearu2} and the approximation of Lemma~\ref{theorem: minimum_k}, we can construct a decomposition of a $U(2)$ gate with a linear number of CNOTs. In the first step, we truncate the decomposition of the controlled gate using a fixed number of base control qubits $\nbase < \nctrl$ and build an ($\nbase+1)$-qubit multi-controlled $U(2)$ gate. In the second step, we add extra qubits on top of the truncated decomposition to obtain the $\nctrl$-controlled $U(2)$ gate, and in the third step, we make use of the approximation to reduce the CNOT count of the circuit. 

We build the $\nbase$-controlled $U(2)$ gate with the method proposed in Ref.~\cite{adenilton2022linear}. Since $\nbase$ is fixed, to increase the number of controls to $\nctrl$, we add $\nextra$ control qubits $\{{e_1}, {e_2}, \ldots, {e_\nextra}\}$ on top of $\nbase$ existing qubits $\{ {b_1}, {b_2}, \ldots, {b_\nbase}\}$, where $\nctrl=\nbase+\nextra$. These additional control qubits act solely on the gates with $b_1$  as a control qubit in the $\nbase$-controlled $U(2)$. \figref{add_controls} shows an example of this circuit for $\nbase=4$ and $\nextra=2$. 

\begin{figure*}[t]
    \centering
    \includegraphics[width=.9\linewidth]{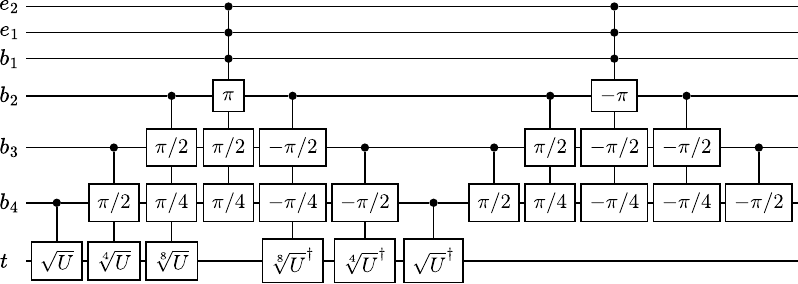}
    \caption{Approximate decomposition of a multi-controlled $U(2)$ gate with one target qubit $t$ and six control qubits. These control qubits include two extra control qubits $\{ e_1$, $e_2 \}$ and four base control qubits $\{b_1$, $b_2$, $b_3$, $b_4 \}$. The gates marked with angles $\pm \pi/2^k$, where $k\in \mathbb{N}$, represent $R_x$ gates, with the specified angle used as an argument, i.e., $R_x(\pm \pi/2^k)$.}
    \label{fig:add_controls}
\end{figure*}

\begin{lemma}\label{theorem: add_qubit}
    It is possible to create a $C^\nctrl U$ gate, $U \in U(2)$ by using the decomposition of an $(\nbase+1)$-qubit multi-controlled gate $C^{\nbase} U$ from Ref.~\cite{adenilton2022linear} and adding $\nextra$ control qubits, $\nctrl=\nbase+\nextra$, that act as additional control qubits for the gates that have $b_1$ as a control qubit.
\end{lemma}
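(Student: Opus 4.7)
The plan is to analyze the explicit gate list of the $(\nbase+1)$-qubit decomposition from Ref.~\cite{adenilton2022linear}, as illustrated in \figref{adenilton_2022}, and partition its gates into two disjoint sets: the gates that have $b_1$ as a control qubit, and the remaining gates. The $\nextra$ extra qubits $\{e_1,\ldots,e_\nextra\}$ will be inserted as additional controls on exactly the first set, while the second set is left untouched, exactly as depicted in \figref{add_controls}.

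The crucial structural observation I will rely on is that the original circuit implements $C^\nbase U$, which by definition acts as the identity whenever $b_1 = |0\rangle$. Since setting $b_1 = |0\rangle$ disables precisely the gates in the first set, the subcircuit formed by the second set must already compose to the identity on its own. This is the pivotal fact on which the entire argument hinges.

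From here the proof reduces to a case analysis on the joint computational-basis state of the augmented control register $\{b_1,\ldots,b_\nbase,e_1,\ldots,e_\nextra\}$. If $b_1$ and each of $e_1,\ldots,e_\nextra$ are $|1\rangle$, the new controls become transparent and the modified gates act identically to their originals; the whole circuit therefore reproduces $C^\nbase U$ with $b_1$ satisfied, applying $U$ to the target $t$ precisely when $b_2,\ldots,b_\nbase$ are all $|1\rangle$ as well. If instead at least one of $\{b_1,e_1,\ldots,e_\nextra\}$ equals $|0\rangle$, every gate in the first set becomes an identity, so only the second set acts, and by the structural observation this yields identity. Together these two cases show that $U$ is applied to $t$ if and only if all $\nctrl = \nbase+\nextra$ controls are $|1\rangle$, which is the defining action of $C^\nctrl U$.

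The main obstacle will be justifying the structural observation itself, namely that the subcircuit of gates not controlled by $b_1$ is exactly the identity. This is not automatic for arbitrary controlled-gate decompositions and is a genuine feature of the particular construction in Ref.~\cite{adenilton2022linear}. I would verify it by directly inspecting the non-$b_1$ gates in \figref{adenilton_2022}, which consist of $R_x(\pm \pi/2^k)$ rotations arranged on the remaining control qubits, and checking that they are laid out in canceling pairs. Once this cancellation is confirmed, the remainder of the argument is just the case analysis outlined above.
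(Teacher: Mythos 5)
Your proposal is correct and follows essentially the same route as the paper: a case analysis on the augmented control register, using the fact that deactivating any extra control $e_i$ has the same effect as deactivating $b_1$, so correctness reduces to the original $C^{\nbase}U$ decomposition (the paper cites Theorem~1 of Ref.~\cite{adenilton2022linear} for exactly the ``structural observation'' you isolate, namely that the gates not controlled by $b_1$ compose to the identity). Your closing suggestion to re-verify this by inspecting canceling pairs is unnecessary, since it already follows from your own deduction that the full circuit is $C^{\nbase}U$ and that $b_1=0$ disables precisely the $b_1$-controlled gates.
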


\begin{proof}
    We can take a $C^\nbase U$ with $\nbase+1$ qubits, $\{{b_1}, {b_2}, \ldots, {b_\nbase}, t\}$ and add $\nextra$ extra control qubits $\{ {e_1}, {e_2}, \dots, {e_\nextra} \}$. These $\nextra$ control qubits act as controls only for the gate with ${b_1}$ as a control qubit. If any of the $n_e$ control qubits is $0$, then it will give the same result as if ${b_1}$ is $0$, and since the circuit was originally a $C^{\nbase} U$ gate, the target qubit is not altered. If all $n_e$ control qubits are equal to $1$, then the action on the target qubit still depends on the $n_b$ control qubits. If any of the $n_b$ qubits is $0$, the target qubit is unaltered~\cite[Theorem 1]{adenilton2022linear}. But, if they all are equal to $1$, then the $U$ gate is applied~\cite[Theorem 1]{adenilton2022linear}. Therefore, the circuit constructed is a $C^\nctrl U$ gate. 
\end{proof}

The result of Lemma~\ref{theorem: add_qubit} allows us to decompose $C^\nctrl U$, but the CNOT cost of the decomposition is still quadratic in the number of control qubits. To address this, we can use the results of Lemma~\ref{theorem: minimum_k} to replace the central multi-controlled $U^{1/2^{\nbase-1}}$ gate that has the additional $\nextra$ control qubits with an identity matrix, with an associated error $\epsilon$, given that we have a sufficiently large $\nbase$. After discarding the central multi-controlled $U^{1/2^{\nbase-1}}$ gate, the remaining multi-controlled gates on the circuit are $R_x$ gates that are $SU(2)$ gates and can be decomposed with an asymptotically linear number of CNOTs~\cite{adenilton_su2_2023}). We have a fixed number of single-controlled $U^{1/2^j}$ and $R_x$ gates that depend on $\nbase$, which is fixed and depends only on the error $\epsilon$. Therefore, if we use this approximation, the number of CNOTs will increase linearly with $\nctrl$. We can also calculate the exact number of CNOTs used by the decomposition.

\begin{theorem}\label{theorem: new_cnot_count}
    An approximated decomposition of a $C^\nctrl U$ gate, $U\in U(2)$, can be constructed, up to an error $\epsilon$ with an upper bound of $-28(\nbase-1)^2+2(\nbase-1)(16n-40)$ CNOTs, $n\geq\ \nbase$, in which $\nbase$ depends on $\epsilon$.
\end{theorem}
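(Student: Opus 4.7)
The plan is to assemble the three ingredients already in place: the $\nbase$-controlled $U(2)$ template of Section~\ref{sec: original_linearu2}, the extension from $\nbase$ to $\nctrl$ controls provided by Lemma~\ref{theorem: add_qubit}, and the approximation from Lemma~\ref{theorem: minimum_k}. Starting from the $\nbase$-controlled template, I would apply Lemma~\ref{theorem: add_qubit} to attach the $\nextra=\nctrl-\nbase$ extra control qubits to every gate carrying $b_1$ as a control, so that in particular the central $C^1 U^{1/2^{\nbase-1}}$ becomes $C^{\nextra+1} U^{1/2^{\nbase-1}}$. Choosing $\nbase\geq\log_2(\pi/\arccos(1-\epsilon^2/2))+1$ as dictated by Lemma~\ref{theorem: minimum_k} makes $\|U^{1/2^{\nbase-1}}-I\|\leq\epsilon$ in the max-norm, so the central gate is $\epsilon$-close to the identity and can be erased while keeping the whole circuit within $\epsilon$ of the target $C^\nctrl U$. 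Every other operator that inherited the extra controls is an $R_x(\pm\pi/2^k)$ rotation; since $R_x\in SU(2)$ has a real main diagonal, the decomposition of Ref.~\cite{adenilton_su2_2023} compiles it with at most $16n-40$ CNOTs, using the remaining qubits of the circuit as dirty ancillas.

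Next I would do the combinatorics on the resulting circuit. From the count of Section~\ref{sec: original_linearu2}, the base template contains $\nbase^2+(\nbase-1)^2$ controlled operators. Unfolding the two triangles shows that $b_1$ is a control of exactly $2\nbase-1$ of them, concentrated in the innermost layer of the first triangle (which contributes $1+2(\nbase-1)$ gates): one is the central $U^{1/2^{\nbase-1}}$ that the approximation removes, and the remaining $2(\nbase-1)$ are $R_x$ rotations that now carry $\nextra+1$ controls and together contribute $2(\nbase-1)(16n-40)$ CNOTs. The remaining $\nbase^2+(\nbase-1)^2-(2\nbase-1)=2(\nbase-1)^2$ operators stay single-controlled and each cost $2$ CNOTs, adding $4(\nbase-1)^2$ CNOTs to a naive total. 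Obtaining the sharper $-28(\nbase-1)^2$ term in the statement then requires invoking the multi-target optimization announced in the introduction and developed later in Section~\ref{sec:proposedmethod}: consecutive $(\nextra+1)$-controlled $R_x$ rotations share their control set, so the relative-phase Toffoli ladders wrapping the inner $SU(2)$ compiler of Ref.~\cite{adenilton_su2_2023} can be merged across neighbors, amortizing a fixed number of CNOTs per adjacent pair and ultimately saving $32(\nbase-1)^2$ CNOTs relative to the naive bound.

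The main obstacle will be pinning down this amortization exactly: I must show that applying the multi-target merging rule to the $2(\nbase-1)$ consecutive $(\nextra+1)$-controlled $R_x$ gates saves precisely $32(\nbase-1)^2$ CNOTs, neither more nor less, so that the bookkeeping closes on the announced upper bound. A secondary but genuine subtlety is that the Barenco-style Toffoli chains inside the $SU(2)$ compiler require the $\lceil n/2\rceil$ dirty-ancilla window of Lemma~7.2 of~\cite{barenco_1995}; this must be checked explicitly under the hypothesis $n\geq\nbase$, possibly with a short case split when $\nextra$ grows large enough to make the window tight, but it does not affect the closed form of the bound. Once both of these pieces are in hand, the claimed upper bound $-28(\nbase-1)^2+2(\nbase-1)(16n-40)$ follows by direct arithmetic from the gate-by-gate count above.
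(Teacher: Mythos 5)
Your setup (base template, Lemma~\ref{theorem: add_qubit} to attach the extra controls, Lemma~\ref{theorem: minimum_k} to erase the central gate) and your count of $2(\nbase-1)^2$ surviving single-controlled gates and $2(\nbase-1)$ multi-controlled $R_x$ gates both match the paper. The gap is in how you cost the multi-controlled $R_x$ gates and, consequently, where you think the $-28(\nbase-1)^2$ term comes from. Each of these gates has $\nextra+1$ controls and one target, so the $SU(2)$ decomposition of Ref.~\cite{adenilton_su2_2023} applies to it as an $(\nextra+2)$-qubit gate at cost $16(\nextra+2)-40$, not $16n-40$; the remaining $\nbase-1$ qubits of the circuit are not part of that gate. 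Summing gives $2(\nbase-1)[16(\nextra+2)-40]$, and substituting $\nextra=n-\nbase-1$ turns this into $2(\nbase-1)(16n-40)-32(\nbase-1)^2$. Added to the $4(\nbase-1)^2$ from the single-controlled gates, this is already the stated bound. The negative quadratic term is pure bookkeeping from rewriting $\nextra$ in terms of $n$; no amortization across gates is needed, and the ``main obstacle'' you identify does not exist for this theorem.

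Invoking the multi-target optimization here is therefore both unnecessary and incorrect as a route to this particular bound. That optimization is the content of Theorem~\ref{last_theorem}, it yields a different (and for large $n$ stronger) expression $4(\nbase-1)^2+32n-112$, and it requires the stronger hypothesis $n\geq\nbase+8$ rather than the $n\geq\nbase$ assumed here; moreover, the saving it produces scales with the number of merged targets times a per-target constant, not as $32(\nbase-1)^2$. So if you carried your plan through you would either fail to close the arithmetic or end up proving a different statement. The fix is simply to cost each multi-controlled $R_x$ on the qubits it actually touches and do the substitution $\nextra=n-\nbase-1$.
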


\begin{proof}

We can again divide the circuit into two triangles and count the number of controlled operators in each triangle. For the first triangle, ignoring the central column of multi-controlled gates, in each line, the number of single-controlled operators can be given as $\{2j \in \mathbb{Z} : 0 \le j \le \nbase-1\}$, summing all the lines gives us $\sum_{j=1}^{\nbase-1} 2j = (\nbase)(\nbase-1)$. For the second triangle, we have similarly $\sum_{j=1}^{\nbase-2} 2j = (\nbase-1)(\nbase-2)$, summing both we have $2(\nbase-1)^2$. As each operator requires 2 CNOTs \cite{barenco_1995}, we have a CNOT count of $4(\nbase-1)^2$ from these gates.

Now, looking at the multi-controlled gates of the central column in the first triangle, the controlled $\nrootU$ is removed, and what remains are $\nbase-1$ multi-controlled $R_x$ gates with $\nextra+1$ control qubits each. These multi-controlled gates can be implemented with at most $16(\nextra+2)-40$ CNOTs each \cite{adenilton_su2_2023}, which leads to a cost of $2(\nbase-1)[16(\nextra+2)-40]$ considering both triangles. We can rewrite it in terms of $n$ instead of $\nextra$, and since $n=\nctrl+1=\nbase+\nextra+1$, this will give us $2(\nbase-1)(16n-40)-32(\nbase-1)^2$. Then, the total CNOT count of the circuit is 
\begin{equation}
   N_{\textnormal{CNOT}} = -28(\nbase-1)^2 + 2(\nbase-1)(16n-40).
\end{equation}

\end{proof}

\noindent Considering that $\nbase$ is fixed and depends solely on the error $\epsilon$, we can write the corollary:

\begin{corollary}
    An approximated decomposition of a $C^\nctrl U$ gate, $U\in U(2)$, can be constructed, up to an error $\epsilon$ with a CNOT count that is linear in n, by using the decomposition proposed in Lemma~\ref{theorem: add_qubit}
\end{corollary}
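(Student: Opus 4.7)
The plan is to derive the corollary as a direct consequence of Theorem~\ref{theorem: new_cnot_count} combined with Lemma~\ref{theorem: minimum_k}, by observing that the parameter $\nbase$ can be chosen independently of $n$.

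First, I would fix the target error $\epsilon$ and invoke Lemma~\ref{theorem: minimum_k} to choose a value of $\nbase$ satisfying $\nbase \geq \log_2(|\theta|/\arccos(1-\epsilon^2/2)) + 1$. Since $|\theta| \leq \pi$ for any $U \in U(2)$, the worst-case bound $\nbase \geq \log_2(\pi/\arccos(1-\epsilon^2/2)) + 1$ suffices uniformly in $U$. This gives a constant $\nbase = \nbase(\epsilon)$ that depends only on $\epsilon$ and guarantees $\|U^{1/2^{\nbase-1}} - I\| \leq \epsilon$, which is precisely the condition needed to justify replacing the central multi-controlled $U^{1/2^{\nbase-1}}$ block by the identity in the construction of Lemma~\ref{theorem: add_qubit}.

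Next, I would apply Theorem~\ref{theorem: new_cnot_count} to the resulting circuit, which yields the CNOT count
\begin{equation}
N_{\textnormal{CNOT}} \leq -28(\nbase-1)^2 + 2(\nbase-1)(16n-40).
\end{equation}
Treating $\nbase$ as the constant $\nbase(\epsilon)$ fixed in the previous step, the right-hand side is an affine function of $n$ of the form $\alpha(\epsilon)\, n + \beta(\epsilon)$ with $\alpha(\epsilon) = 32(\nbase(\epsilon)-1)$ and some constant $\beta(\epsilon)$. Hence $N_{\textnormal{CNOT}} = O(n)$, which establishes the linearity claim.

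There is no genuine obstacle here; the only subtlety worth highlighting is the uniformity of the choice of $\nbase$ in $U$. I would address this by noting that the bound in Lemma~\ref{theorem: minimum_k} depends on the eigenphase $\theta$ only through its absolute value, which is bounded by $\pi$, so a single $\nbase(\epsilon)$ works for every $U \in U(2)$. With $\nbase$ independent of $n$, the quadratic-in-$\nbase$ term contributes only a constant offset, and the decomposition inherits linear scaling in $n$.
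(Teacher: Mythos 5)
Your proof is correct and follows the same route the paper takes: the paper treats the corollary as an immediate consequence of Theorem~\ref{theorem: new_cnot_count} once $\nbase$ is fixed as a function of $\epsilon$ alone, so that the bound $-28(\nbase-1)^2 + 2(\nbase-1)(16n-40)$ becomes affine in $n$. Your added remark on the uniformity of $\nbase(\epsilon)$ over all $U \in U(2)$ via $|\theta| \leq \pi$ is a welcome clarification but does not change the argument.
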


While the outcome is linear with respect to $n$, the introduction of multi-controlled $SU(2)$ gates brings with it high coefficients. This means the approximated circuit ends up using more CNOTs than its original counterpart for smaller values of $n$. To enhance the efficiency of the proposed method, it is crucial to minimize the CNOT count of the multi-controlled $R_x$ gates, as discussed in Section~\ref{sec:multitarget}. Nonetheless, Theorem 1 outperforms the approximate decomposition proposed in Ref.~\cite{barenco_1995}. When leveraging the optimizations from Ref.~\cite{iten2016quantum}, as depicted in \figref{barenco_compare}, Theorem 1 maintains a consistent advantage in CNOT count, especially evident when $\epsilon=0.001$.

\begin{figure}
    \centering
    \includegraphics[width=\linewidth]{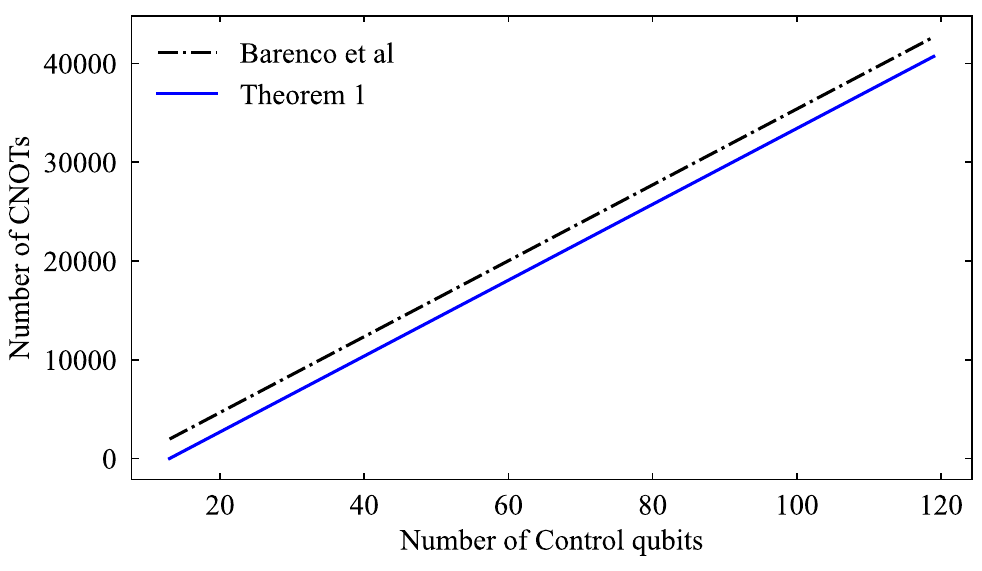}
    \caption{Comparison on the number of CNOTs between the method proposed Theorem 1 from this work against Lemma 7.8 from Ref.~\cite{barenco_1995}, for $\epsilon=0.001$.}
    \label{fig:barenco_compare}
\end{figure}

\subsection{Error analysis of the approximated circuit}

While the error of approximating a single gate is $\errorfunc$, we would like to determine the total error of the entire circuit when we use the approximation. For this, we need to consider two cases: In the first case, all the control qubits are active, and the original circuit (without approximation) should act with an $U$ gate in the target qubit; In the second case, at least one of the control qubits is inactive, and the original circuit should act with an identity matrix $I$ on the target qubit. In both cases, the error is defined using the same metric from Eq.~\eqref{eq:metric}. We have the Lemma~\ref{theorem: error_circuit_active} for the first case.

\begin{lemma}{\label{theorem: error_circuit_active}}
When all controls are active, the approximated circuit will have a total error $\errorfunc_T = \errorfunc$ concerning the original circuit, which should apply an $U$ gate on the target qubit.
\end{lemma}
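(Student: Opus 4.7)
The plan is to isolate the single location where the approximated circuit departs from the exact one and to translate the resulting discrepancy at the target qubit directly into the spectral error $\errorfunc$ of the omitted root.

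First, I would fix all controls in state $|1\rangle$. Under this assumption every multi-controlled gate in \figref{add_controls}---the single-controlled $U^{\pm 1/2^j}$, the multi-controlled $R_x$ conjugators, and the central multi-controlled $U^{1/2^{\nbase-1}}$---fires and reduces to the action of its single-qubit unitary on $t$. By Lemma~\ref{theorem: add_qubit} together with the underlying construction of Ref.~\cite{adenilton2022linear}, the product of these single-qubit unitaries along the wire equals $U$ exactly. I would encode this product as $A\, U^{1/2^{\nbase-1}}\, B = U$, where $A$ and $B$ collect all target-qubit unitaries to the left and right of the central root.

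Next, I would compute the approximated circuit's action on $t$. The only difference introduced by the approximation is that the central multi-controlled $U^{1/2^{\nbase-1}}$ is replaced by the identity, so the target now experiences $A\cdot I\cdot B=AB$. Hence
\begin{equation}
\errorfunc_T \;=\; \bigl\|\, U - AB \,\bigr\| \;=\; \bigl\|\, A\bigl(U^{1/2^{\nbase-1}} - I\bigr) B \,\bigr\|.
\end{equation}
Passing to the eigenbasis of $U^{1/2^{\nbase-1}}$ exactly as in the proof of Lemma~\ref{theorem: minimum_k}, the matrix $U^{1/2^{\nbase-1}} - I$ is diagonal with entries $e^{i\theta_j/2^{\nbase-1}}-1$, so its max-element norm coincides with its operator norm. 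Combined with the unitarity of $A$ and $B$, this collapses the right-hand side to $\bigl\|U^{1/2^{\nbase-1}} - I\bigr\| = \errorfunc$.

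The main obstacle I expect is precisely the norm subtlety: the max-element norm adopted in Eq.~\eqref{eq:metric} is not, in general, invariant under multiplication by unitaries, so one cannot appeal to unitary invariance blindly. The resolution exploits that the error matrix $U^{1/2^{\nbase-1}} - I$ is normal with the same eigenvectors as $U$; its relevant norm is therefore captured by its spectrum alone, which is exactly the reduction already employed (implicitly) in the proof of Lemma~\ref{theorem: minimum_k}. Once this identification is made, the equality $\errorfunc_T=\errorfunc$ is immediate.
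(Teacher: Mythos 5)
Your proposal is correct and follows essentially the same route as the paper: both identify the sole discrepancy as the omitted central root, write the resulting error as $\|U^{1/2^{\nbase-1}}-I\|$ multiplied by unitaries, and pass to the eigenbasis of $U$ to evaluate it as the spectral quantity from Lemma~\ref{theorem: minimum_k} (the paper simply writes the circuit output as $UU^{-1/2^{\nbase-1}}$ where you write $AB$ with $AU^{1/2^{\nbase-1}}B=U$, which coincide because $A$ and $B$ are powers of $U$). Your explicit flagging of the non-unitary-invariance of the max-element norm is a welcome refinement over the paper's implicit ``in the diagonal basis'' step; just note that the collapse ultimately relies on $A$ and $B$ being diagonal in the same eigenbasis as the error matrix, not on their unitarity alone.
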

\begin{proof}

By using the approximation, the output of the circuit when all controls are active will be  $UU^{-1/2^{\nbase-1}}$. Then, the total error of the circuit $\errorfunc_T$ is given by:

\begin{equation}
    \errorfunc_T = \| U U^{-1/2^{\nbase-1}}-U\|_{max}.
\end{equation}

\noindent In the diagonal basis 
\begin{equation*}
    \| U(U^{-1/2^{\nbase-1}}-I) \|_{max} = \|U\|_{max} \cdot \| \nrootInv-I \|_{max}.
\end{equation*}
Since $U \in U(2)$ and $\|U\|=1$: 
\begin{align}
    \errorfunc_T 
    &= \|\nrootInv-I\|_{max} \nonumber \\
    &=\|\nrootInv(I-\nrootU)\|_{max} \nonumber \\
    &= \|\nrootU-I\|_{max}=\errorfunc.
\end{align}
So, when all the controls are active, the total error of the circuit is equal to the individual error of the approximated gate.

\end{proof}

\begin{lemma}{\label{theorem: error_circuit_inactive}}
When at least one of the controls is inactive, the approximated circuit will have a total error $\errorfunc_T = \errorfunc$ concerning the original circuit, which should apply an $U$ gate on the target qubit.
\end{lemma}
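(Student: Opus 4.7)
The plan is to case-split on which control qubit is inactive. In case (i) at least one control of the central multi-controlled $\nrootU$ gate is zero, i.e., one of $e_1, \ldots, e_\nextra$ or $b_1$ equals $0$. In case (ii) all of those equal $1$ but $b_j = 0$ for some $j \in \{2, \ldots, \nbase\}$. Case (i) is immediate: the central multi-controlled gate acts as the identity in the exact circuit (an inactive control kills it) and is literally the identity in the approximated circuit, so the two circuits coincide on such inputs and the contribution to the error is zero.

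For case (ii) the central multi-controlled gate fires as $\nrootU$ in the exact circuit but is the identity in the approximated one. By correctness of the underlying decomposition (Theorem 1 of \cite{adenilton2022linear} combined with Lemma~\ref{theorem: add_qubit}), the exact circuit's net action on the target is the identity. The key structural observation is that every gate acting on the target in the decomposition is of the form $U^{\pm 1/2^j}$; such gates all share the eigenbasis of $U$ and therefore commute. In that eigenbasis the effective target action is diagonal with entries $e^{i\theta_k \sigma}$, where $\sigma$ is the signed sum of exponents of the target-acting gates that actually fire for the given control configuration. The exact decomposition forces $\sigma = 0$. Replacing the central $\nrootU$ with $I$ simply removes a contribution of $1/2^{\nbase-1}$ from $\sigma$, so the approximated circuit's net target action becomes $\nrootInv$.

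The remaining computation mirrors Lemma~\ref{theorem: error_circuit_active}: in the eigenbasis, $\|\nrootInv - I\|_{max} = \max_k |e^{-i\theta_k/2^{\nbase-1}} - 1|$, and since $|e^{-i\phi} - 1| = |e^{i\phi} - 1|$ for real $\phi$, this equals $\|\nrootU - I\|_{max} = \errorfunc$ by Lemma~\ref{theorem: minimum_k}. Taking the maximum over cases (i) and (ii) gives $\errorfunc_T = \errorfunc$. The main obstacle is the structural step in case (ii): rigorously tracking which target-acting gates fire under a control configuration with some $b_j = 0$ requires following how the $R_x$ gates and CNOTs on the control wires propagate and modify the effective control line feeding the target-acting gates. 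Once that bookkeeping is carried out and commutativity is invoked to reduce everything to exponent arithmetic, the final estimate follows from the same diagonal-basis argument used in the previous lemma.
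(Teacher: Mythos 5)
Your proof is correct and follows essentially the same route as the paper: split on whether one of $b_1, e_1, \dots, e_\nextra$ is inactive (error zero) versus all of those active but some other $b_j$ inactive (net target action $\nrootInv$, giving error $\errorfunc$ by the same diagonal-basis computation as Lemma~\ref{theorem: error_circuit_active}). Your exponent-bookkeeping justification for why the residual is exactly $\nrootInv$ is more explicit than the paper's one-line "there will be a $\nrootU$ missing" argument, but it is the same idea.
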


\begin{proof}

If any of the control qubits $b_1$ or $e_1, e_2, \dots, e_\nextra$ are inactive, the circuit will perform the same operation as it would without the approximation, as it only affects the gate controlled by these qubits, and, as per~\cite{adenilton2022linear}, the output will be the identity, and the error $\errorfunc=0$.

If the control qubits $b_1$ and $e_1, e_2, \dots, e_\nextra$ are active, and one of the other control qubits is inactive, then there will be a $\nrootU$ missing for the circuit to output the identity. Therefore, the output would be $U^{-1/2^\nbase}$. The total error $\errorfunc_T$ is then given by Eq.~\eqref{eq:t_error}. Which is the same error as Lemma~\ref{theorem: error_circuit_active}.
\begin{equation}\label{eq:t_error}
    \errorfunc_T = \| \nrootInv-I\|_{max}.
\end{equation}

\end{proof}

\subsection{Multi-controlled multi-target SU(2) gates decomposition}
\label{sec:multitarget}

According to Ref. \cite{adenilton_su2_2023}, we can provide the decomposition of any $n$-qubit multi-controlled $W \in SU(2)$ operator with a number of gates proportional to 20n (or 16n if $W$ has a real diagonal). This decomposition is precisely expressed in terms of $A \in SU(2)$ operators and $C^{k_1} X$ and $C^{k_2} X$ gates, where $k_1 = \lceil k/2 \rceil$ and $k_2 = \lfloor k/2 \rfloor$.
In this section, we extend these theorems to encompass the decomposition of a collection of $\ntargets$ multi-controlled one-qubit operators $W_{t_i} \in SU(2)$. To start, we introduce the notation $\prod_{i=1}^{\ntargets} C^k M_{t_i}$ to describe a collection of $\ntargets$ one-qubit operators $M_{t_i}$ that share the same control qubits while having distinct target qubits.

First, we present a multi-target version of the Toffoli gate. Ref.~\cite{nielsen2010quantum} showed us the Toffoli gate decomposition \figref{originalToffoli} using six CNOTs. However, we can create a multi-target version of this gate, where each new target leads to a fixed cost of 4 CNOTs.

\begin{figure}[htb!]
    \centering
    \includegraphics[width=\linewidth]{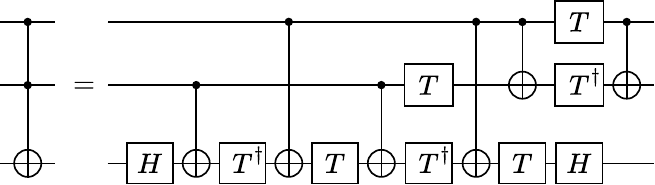}
    \caption{Exact decomposition of the Toffoli gate, described by \cite{welch2014efficient}, where the target $ X = H \cdot Z \cdot H$.}
    \label{fig:originalToffoli}
\end{figure}

\begin{lemma}\label{multi-target-toffoli}
    The gates $\prod_{i=1}^{\ntargets} C^2 X_{\target_i}$ can be implemented using $2\ntargets + 4$ CNOTs.
\end{lemma}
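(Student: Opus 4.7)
The plan is to reduce the product of Toffolis that share both controls to a single Toffoli applied to the \emph{parity} of the target qubits, and then count CNOTs.

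First, I would rewrite each component via the identity $X = HZH$ applied on the target wire. Denoting by $CCZ_i$ the controlled-controlled-$Z$ on the two shared controls and $\target_i$, this gives $C^2 X_{\target_i} = H_{\target_i} \cdot CCZ_i \cdot H_{\target_i}$. Since distinct targets act on disjoint wires, the Hadamards on different $\target_i$'s, and the $CCZ_i$'s on different targets, pairwise commute. Sliding all the Hadamards to the outside yields
\[
\prod_{i=1}^{\ntargets} C^2 X_{\target_i} \;=\; \Big(\prod_i H_{\target_i}\Big) \Big(\prod_i CCZ_i\Big) \Big(\prod_i H_{\target_i}\Big).
\]
The outer Hadamards are single-qubit gates and cost no CNOTs, so the task reduces to implementing the middle factor.

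Next, I would observe that $\prod_i CCZ_i$ is diagonal in the computational basis: on a basis state with control values $c_1, c_2$ and target values $\target_1, \ldots, \target_{\ntargets}$, it contributes the phase $\prod_i (-1)^{c_1 c_2 \target_i} = (-1)^{c_1 c_2 \, p}$, where $p = \target_1 \oplus \cdots \oplus \target_{\ntargets}$. So this operator depends only on the parity $p$ and can be realized by overwriting $\target_1$ with the parity via a chain of $\ntargets - 1$ CNOTs (one from each other target into $\target_1$), applying a single $CCZ$ on the two controls and $\target_1$, and finally uncomputing the parity with another $\ntargets - 1$ CNOTs. Because the central $CCZ$ is diagonal, the uncomputation correctly restores each $\target_j$ for $j \geq 2$, returns $\target_1$ to its original value, and imprints the desired phase globally.

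Summing the costs, the parity fan-in and fan-out contribute $2(\ntargets - 1)$ CNOTs, and the central $CCZ$ on three qubits contributes $6$ CNOTs (it is a Toffoli conjugated by a Hadamard on $\target_1$, hence the same count as in \figref{originalToffoli}). The total is $2(\ntargets - 1) + 6 = 2\ntargets + 4$, as claimed. The only step requiring care is the ancilla-free parity trick: one must verify that the middle $CCZ$-product does not entangle $\target_1$ with anything beyond a phase, which is immediate from its diagonality. I expect this to be the sole conceptual obstacle, and it is essentially a one-line observation.
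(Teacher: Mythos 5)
Your proof is correct, and it reaches the count $2\ntargets+4$ by a genuinely different route from the paper. The paper works directly in the ``$X$ picture'': it takes the six-CNOT Welch decomposition of $C^2X_{\target_1}$ and, for each additional target $\target_j$, inserts two CNOTs from $\target_1$ to $\target_j$ bracketing the action on the first target, arguing operationally that $\target_j$ flips exactly when $\target_1$ does. You instead conjugate by Hadamards to pass to the ``$Z$ picture,'' observe that $\prod_i CCZ_i$ is diagonal with phase $(-1)^{c_1c_2 p}$ depending only on the target parity $p$, and realize it by a parity fan-in onto $\target_1$, one six-CNOT $CCZ$, and a fan-out. The two circuits are in fact close relatives --- Hadamard conjugation on the target wires turns the paper's target fan-out into your parity fan-in --- but the arguments are not the same, and yours buys a cleaner correctness proof: everything reduces to the one-line observation that a diagonal middle block cannot entangle $\target_1$ beyond a phase, whereas the paper's description (``two $CX$ gates \ldots on either side of an $X$ gate'') is informal and leaves the reader to verify the cancellation. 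Your version also keeps the extra CNOTs entirely outside the Toffoli block rather than interleaved with its $T$ gates, and the parity argument generalizes verbatim to any number of shared controls, which is the direction the paper later takes in Lemma~\ref{lem:mcx-multi-target} by other means. Both derivations yield the same CNOT count, and your accounting ($2(\ntargets-1)$ for the parity chains plus $6$ for the central $CCZ$, matching \figref{originalToffoli}) is accurate.
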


\begin{proof}
    Consider the $C^2 X$ decomposition described in \cite{welch2014efficient}, as shown in \figref{originalToffoli}, where $T = \begin{pmatrix}1 & 0\\ 0 & e^{i \pi/4}\end{pmatrix}$ is the matrix that induces a phase $\pi/4$. This decomposition attributes the X gate action to the target qubit.
    
    If we want to add a new target for this decomposition, we can introduce two $CX$ gates on the new target, with control qubits on either side of an $X$ gate. This way, the second target qubit activates only when the first target does. In this scenario, only one control qubit will trigger the $X$ gate on the second target.
    
    It is a recurring result: we add two $CX$ gates to the following target with control qubits on both sides of the preceding $CX$ gates. The initial target qubit requires six CX gates. Thus, the total CNOT count for a Toffoli multi-target $\prod_{i=1}^{\ntargets} C^2 X_{\target_i}$ is
    
    \begin{equation}
        N_{\textnormal{CNOT}} = 2\ntargets + 4.
    \end{equation}
\end{proof}

Lemma \ref{multi-target-toffoli} is a reformulation of the multi-target version of the Toffoli decomposition described in Ref.~\cite{wille2013improving}. Reference~\cite{wille2013improving} uses a decomposition for any operator $U$ with two controls, requiring 8 CNOTs. However, we can use the result of Ref.~\cite{welch2014efficient} to decompose a Toffoli gate using only $H$ and $T$-gates, requiring only 6 CNOTs.

We will use Lemma \ref{multi-target-toffoli} to construct a decomposition for the gates $\prod_{i=1}^{\ntargets} C^{\nctrl} X_{\target_i}$, extending Lemma 7.2 in \cite{barenco_1995} to describe their decomposition. \figref{multi-target_mcx} ilustrates Lemma~\ref{lem:mcx-multi-target} with $4$ controls, $3$ targets, and $2$ auxiliary qubits.

\begin{lemma}\label{lem:mcx-multi-target}
   The gates $\prod_{i=1}^{\ntargets} C^{\nctrl} X_{\target_i}$ can be generated by a circuit with $2(2\nctrl + \ntargets - 5)$ $C^2 X$ gates, with the restriction that the total number of qubits $n \geq 5$ and $\nctrl \in \{3, \dots, \lceil n/2 \rceil \}$.
\end{lemma}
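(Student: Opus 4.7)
The plan is to extend the dirty-ancilla decomposition of $C^\nctrl X$ from Lemma 7.2 of \cite{barenco_1995} (shown in \figref{toffoli-chain-decomposition}) to the multi-target setting, proceeding by induction on $\ntargets$. The base case $\ntargets = 1$ is exactly Barenco's lemma, which uses a doubled V-chain of $C^2 X$ gates: two identical passes of relative-phase Toffolis propagate the conjunction of the $\nctrl$ controls along the dirty auxiliary qubits, a central Toffoli in the middle of each pass writes the result onto the target, and the double-chain structure cancels the relative phases introduced by the $C^2 X^{\bullet}$ gates. This yields $4\nctrl - 8$ Toffolis, which coincides with $2(2\nctrl + \ntargets - 5)$ at $\ntargets = 1$. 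The hypothesis $n \geq 5$ and $\nctrl \in \{3, \dots, \lceil n/2 \rceil\}$ is inherited from the original lemma to guarantee enough dirty auxiliary qubits are available.

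For the inductive step, assume the bound holds for $\ntargets - 1$ targets and consider adding a new target $\target_\ntargets$. I would insert one extra $C^2 X$ gate at each of the two central positions of the doubled V-chain, both sharing the same controls as the existing central Toffoli (the top auxiliary qubit, which at that moment holds $c_1 \wedge \cdots \wedge c_\nctrl$, together with the appropriate original control) but with target $\target_\ntargets$. Because $\target_\ntargets$ is disjoint from the auxiliary chain and from the previously added target qubits, the two new Toffolis commute with every other gate in the decomposition, so they neither disturb the phase cancellation on the auxiliary chain nor alter the action already constructed on $\target_1, \dots, \target_{\ntargets-1}$. Their combined effect on $\target_\ntargets$ is $X^{c_1 \wedge \cdots \wedge c_\nctrl}$, as required. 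This adds exactly $2$ Toffolis per new target, bringing the total to
$$4\nctrl - 8 + 2(\ntargets - 1) = 2(2\nctrl + \ntargets - 5).$$

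The main obstacle, and the step I would verify most carefully, is the claim that inserting the extra pair of central Toffolis leaves the dirty auxiliaries and the previously constructed targets unchanged while producing a clean $C^\nctrl X$ on $\target_\ntargets$, with no residual relative phase on the controls. This reduces to the same phase-accounting used in Barenco's original proof of Lemma 7.2, now applied in parallel across the $\ntargets$ target lines. It should go through without modification because, by construction, the inserted Toffolis act on disjoint target qubits and therefore commute with all $C^2 X^{\bullet}$ gates on the auxiliary chain as well as with the central Toffolis attached to the other targets; the per-target correctness then follows from the single-target correctness of Lemma 7.2 in \cite{barenco_1995}.
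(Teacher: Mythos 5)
Your proposal is correct and follows essentially the same route as the paper: both take the doubled V-chain of Barenco's Lemma 7.2, duplicate the target-acting $C^2X$ at each of its two central occurrences for every additional target (same controls, new target qubit), and observe that these extra gates commute with the rest of the chain and inherit correctness from the single-target case, giving $4(\nctrl-2)+2(\ntargets-1)=2(2\nctrl+\ntargets-5)$ gates. Your inductive phrasing and explicit note that the inserted pair composes to $X^{c_1\wedge\cdots\wedge c_{\nctrl}}$ on the new target is, if anything, slightly more careful than the paper's direct argument.
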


\begin{proof}
    
    The circuit is very similar to the one of Lemma 7.2 from Ref.~\cite{barenco_1995}, with the gates that act only on the control and auxiliary qubits being unaltered. The modification is in the addition of $(\ntargets-1)$ $C^2 X$ gates that have the same control qubits as the $C^2 X$ targeting the $t_1$ target qubit, but act on the target qubits $\{ t_2, t_3, \dots, t_{\ntargets} \}$. And, they are all placed together with the $C^2 X$ gate targeting the $t_1$ qubit, all of them need to be applied before the other gates that act exclusively on the control or auxiliary qubits. 
    
    Since the gates that act only on the control and auxiliary qubits are unaltered if one of the control qubits is inactive, then the action of the gates targeting $\{ t_2, t_3, \dots, t_{\ntargets} \}$ will have the same result as the one targeting $t_1$. If one of the controls is not active, then the target qubits will be unchanged. If all the control qubits are active, then the action of the gates targeting $\{ t_2, t_3, \dots, t_{\ntargets} \}$ will again have the same result as the one targeting $t_1$, in all the target qubits, we will have the action of a $X$ gate.
\end{proof}

\begin{figure}[htb!]
    \centering
    \includegraphics[width=\linewidth]{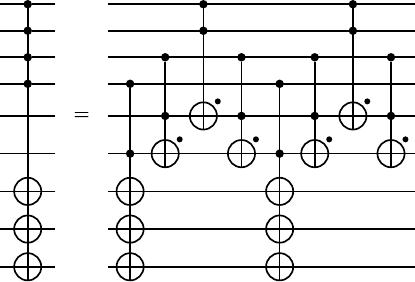}
    \caption{Decomposition of a multi-controlled multi-target CNOT gate with $4$ controls, $3$ targets, and $2$ auxiliary qubits: The gates that act on the additional target qubits $\{ t_2, t_3 \}$ are placed together with the gate that targets the $t_1$ qubit and all of them need to be applied before the other gates that act exclusively on the control or auxiliary qubits. The gates that act only on the control and auxiliary qubits are unaltered.}
    \label{fig:multi-target_mcx}
\end{figure}

\begin{figure*}[ht!]
    \centering
    \includegraphics[width=.7 \linewidth]{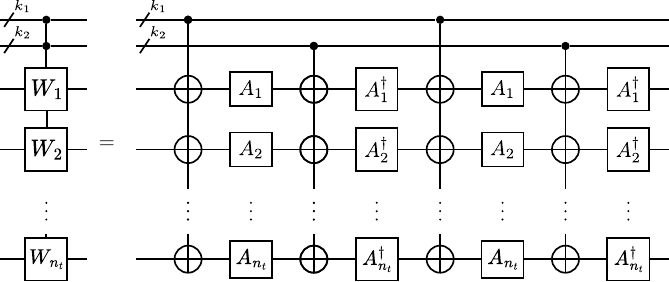}
    \caption{Decomposition scheme based for multi-controlled multi-target $SU(2)$ gates, where $W_{i}$, with $i = \{1, 2,\dots,\ntargets\}$ and $ W \in SU(2)$, has at least one real-valued diagonal.}
    \label{fig:setsu2}
\end{figure*}

The decomposition of Lemma \ref{lem:mcx-multi-target} works in two stages: the action stage, which promotes the action of control qubits on each target qubit, and the reset stage, to return the auxiliary qubits to the original state. The number of gates $C^2 X^\bullet$, acting on the auxiliary qubits, scale linearly with the number of control qubits.

For the lowest control qubit level, $n_c = 3$, we need two gates $C^2 X^\bullet$, one for the action stage, and another to reset the action stage. For each new control, we need to add four gates $C^2 X^\bullet$, two for the action stage and two for the reset stage.

However, because we do not have enough Toffoli gates to cancel the matrix of the relative phase, we can not implement the other Toffoli gates (those that act on the target qubits, $\prod_{i=1}^{\ntargets} C^k X_{\target_i}$) using the relative phase decomposition. Instead, we need to apply the gates $\prod_{i=1}^{\ntargets} C^2 X_{t_i}$ with CNOT cost described on Lemma \ref{multi-target-toffoli}.

Lemma~\ref{lem:mcx-multi-target} can then be used together with the decomposition of $SU(2)$ gates from Ref.~\cite{adenilton_su2_2023}, to modify the decomposition for a multi-target scenario.

\begin{lemma}\label{exact_SU(2)_multi-target}
    $\prod_{i=1}^{\ntargets} C^k W_{\target_i}$ can be generated by the circuit of the \figref{setsu2}, if the operators $W_{\target_i}$ have real elements on one of the diagonals, and is built from four elements: a set of gates $A \in SU(2)$~\cite[Lemma 1]{adenilton_su2_2023}, and its conjugate and transpose gates, $C^{k_1} X$ and $C^{k_2} X$, where $k_1 = \lceil k/2 \rceil$ and $k_2 = \lfloor k/2 \rfloor$, described in \lemref{mcx-multi-target}.
\end{lemma}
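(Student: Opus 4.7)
The plan is to verify the circuit in \figref{setsu2} by lifting the single-target decomposition of Ref.~\cite{adenilton_su2_2023} (recalled in \figref{rafaella_circ1}) to the multi-target setting via a pointwise case analysis on the control register. The starting observation is that the single-target decomposition already has the structure $A \cdot M_1 \cdot A^{\dagger} \cdot M_2 \cdot A \cdot M_1 \cdot A^{\dagger} \cdot M_2$ (or the analogous pattern used in \cite{adenilton_su2_2023}) acting on the one target qubit, where $M_1$ and $M_2$ are the $C^{k_1}X$ and $C^{k_2}X$ gates. Two algebraic facts drive the correctness of that single-target circuit: (i) when all controls are active, each $M_j$ applies $X$ to the target, and the alternating product of $A$, $A^{\dagger}$, and $X$ realizes $W$ (this is precisely the hypothesis that $W \in SU(2)$ admits a real secondary/main diagonal, so it can be written in that $ABAB$ form); (ii) when at least one control is inactive, each $M_j$ acts as the identity on the target, and the product $A\cdot A^{\dagger}\cdot A\cdot A^{\dagger}$ collapses to $I$.

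First I would formally define the multi-target circuit: keep the same four temporal slots for the controlled-$X$ operations, but in each slot substitute the single-target $C^{k_j}X_{t_1}$ by the multi-controlled multi-target gate $\prod_{i=1}^{\ntargets} C^{k_j} X_{t_i}$ provided by Lemma~\ref{lem:mcx-multi-target}. In parallel, for every additional target $t_i$ I insert copies of the same $A,\,A^{\dagger},\,A,\,A^{\dagger}$ single-qubit operators in the same temporal positions as they appear for $t_1$. Because each such $A$ or $A^{\dagger}$ acts on a distinct target qubit, these single-qubit operators commute with every gate applied to the other targets and with every gate acting solely on control and auxiliary qubits, so the described layering is unambiguous.

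Next I would verify correctness by the two-case argument, inherited from the single-target proof. Suppose all controls are active: by Lemma~\ref{lem:mcx-multi-target} every multi-target multi-controlled $X$ gate applies $X$ simultaneously to each $t_i$, so each target qubit independently sees exactly the sequence $A\cdot X\cdot A^{\dagger}\cdot X\cdot A\cdot X\cdot A^{\dagger}\cdot X = W$ that already appears in the single-target derivation. Thus the net effect on $(t_1,\dots,t_{\ntargets})$ is $W_{t_1}\otimes\cdots\otimes W_{t_{\ntargets}}$, as required. Suppose instead at least one control qubit is inactive: Lemma~\ref{lem:mcx-multi-target} guarantees that each $\prod_{i} C^{k_j}X_{t_i}$ acts as the identity on \emph{every} $t_i$ simultaneously (and the auxiliary qubits are cleanly returned to their initial state by that lemma's reset stage), so each target qubit sees only $A\cdot A^{\dagger}\cdot A\cdot A^{\dagger}=I$. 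Hence the circuit acts as the identity on all targets, matching the definition of $\prod_{i=1}^{\ntargets} C^k W_{t_i}$.

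The main obstacle I anticipate is bookkeeping rather than algebra: one must check that inserting the extra per-target $A$ and $A^{\dagger}$ blocks does not disturb the auxiliary-qubit scheduling inside the multi-target $C^kX$ decomposition of Lemma~\ref{lem:mcx-multi-target}, and that the two multi-target $C^{k_j}X$ occurrences in the circuit can legitimately share the same set of dirty auxiliary qubits across the different $t_i$'s. Both follow from the fact that all $A$-type single-qubit gates commute past any operator supported on the disjoint control/ancilla register, and that Lemma~\ref{lem:mcx-multi-target} already handles ancilla cleanup internally, independently of the target count $\ntargets$. Once those commutations are spelled out, the case analysis above closes the proof.
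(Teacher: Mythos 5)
Your proposal is correct and follows essentially the same route as the paper: both arguments reduce correctness to the observation that the $A_i,A_i^{\dagger}$ gates and the multi-target $C^{k_1}X$, $C^{k_2}X$ gates of Lemma~\ref{lem:mcx-multi-target} each act on target $t_i$ exclusively, so every target qubit independently experiences exactly the single-target $SU(2)$ decomposition of Ref.~\cite{adenilton_su2_2023}. The only difference is one of detail: the paper simply cites that single-target lemma for the per-target behavior, whereas you re-derive it via the explicit two-case analysis (all controls active versus at least one inactive), which is a harmless elaboration rather than a different approach.
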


\begin{proof}
    In the circuit of \figref{setsu2}, the $A_i$ and $A_i^{\dagger}$ gates affect only the $\target_i$ qubits. Additionally, according to \lemref{mcx-multi-target}, the $C^{k_1} X_{\target_i}$ and $C^{k_2} X_{\target_i}$ targets also exclusively affect the $\target_i$ qubits. Consequently, each $\target_i$ qubit undergoes the action of the $U$ gate decomposition associated solely and exclusively with itself~\cite[Lemma 1]{adenilton_su2_2023}.
\end{proof}

Lemma 6 will be used to optimize the multi-controlled $U(2)$ gate, implementing all multi-controlled $R_x$ gates (see \figref{add_controls}). In this case, since the $R_x$ gates have real values on the main diagonal, we will apply~\cite[Theorem 2]{adenilton_su2_2023}, making a modification to each $R_x$ gate, allowing them to be decomposed in the form of Lemma 6. The following theorem describes the circuit complexity of Lemma \ref{exact_SU(2)_multi-target}.

\begin{theorem}\label{theorem: multi-target_su2}
    With $n\geq 8$, the decomposition of a multi-target $(n-2)$-controlled $SU(2)$ gate with $\ntargets$ targets requires $16n+16(\ntargets-1)-40$ CNOTs.
\end{theorem}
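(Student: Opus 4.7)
The plan is to reduce the multi-target count to the known single-target count and then quantify the CNOT overhead per additional target. I would first handle the base case $\ntargets = 1$ by invoking~\cite[Theorem 2]{adenilton_su2_2023}: an $(n-2)$-controlled $SU(2)$ gate with at least one real diagonal admits a decomposition with $16n - 40$ CNOTs, which already matches the claimed formula for a single target and fixes the $16n - 40$ term.

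Next I would inspect the circuit of Figure~\ref{fig:setsu2} (the structure of Lemma~\ref{exact_SU(2)_multi-target}) and classify its constituents by how their CNOT count depends on $\ntargets$. The single-qubit $A$ and $A^{\dagger}$ gates act independently on each target and contribute no CNOTs regardless of $\ntargets$. Only the $C^{k_1} X$ and $C^{k_2} X$ blocks, now realized via Lemma~\ref{lem:mcx-multi-target}, carry an $\ntargets$-dependence.

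The key counting step is to compare each multi-target $C^k X$ block to its single-target counterpart. By Lemma~\ref{lem:mcx-multi-target}, the Toffolis that act purely on control and auxiliary qubits are identical in both cases; the only change is in the target-acting group, which goes from a single standard Toffoli ($6$ CNOTs) to a single multi-target Toffoli of cost $2\ntargets + 4$ by Lemma~\ref{multi-target-toffoli}. Hence every multi-target $C^{k_j} X$ block costs exactly $2(\ntargets - 1)$ more CNOTs than its single-target version. Summing this overhead over the eight $C^{k_j} X$ blocks present in the $SU(2)$ scheme of~\cite[Theorem 2]{adenilton_su2_2023} yields a total excess of $16(\ntargets - 1)$, and adding it to the base count gives $16n + 16(\ntargets - 1) - 40$. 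The assumption $n \geq 8$ is just sufficient to ensure that each $C^{k_j} X$ satisfies the precondition $\nctrl \in \{3, \dots, \lceil n/2 \rceil\}$ required by Lemma~\ref{lem:mcx-multi-target}.

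The main obstacle is this per-block accounting: one has to verify that the multi-target construction of Lemma~\ref{lem:mcx-multi-target} truly reuses every non-target Toffoli of the single-target case, so that no hidden overhead sneaks in from the action/reset stages, and that the underlying $SU(2)$ layout really does contain precisely eight multi-controlled $X$ blocks. Once those two structural facts are in hand, the rest is a clean linear sum.
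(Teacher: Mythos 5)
Your overall strategy mirrors the paper's: take the single-target cost $16n-40$ from the $SU(2)$ decomposition of Ref.~\cite{adenilton_su2_2023} as the base, note that only the multi-controlled $X$ blocks acquire an $\ntargets$-dependence, and sum the per-block overhead obtained from Lemmas~\ref{multi-target-toffoli} and~\ref{lem:mcx-multi-target}. The final formula is correct, but your accounting of that overhead contains two errors that happen to cancel. First, the scheme of Lemma~\ref{exact_SU(2)_multi-target} (\figref{setsu2}) contains \emph{four} multi-controlled $X$ blocks --- two copies of $C^{k_1}X$ and two of $C^{k_2}X$ --- not eight. Second, each such block contains \emph{two} target-acting Toffoli groups, not one: in the dirty-ancilla construction of Lemma 7.2 of Ref.~\cite{barenco_1995}, the Toffoli that writes onto the target appears once in the action stage and once again in the reset stage, which is already visible in the gate count $2(2\nctrl+\ntargets-5)$ of Lemma~\ref{lem:mcx-multi-target}, where the coefficient of $\ntargets$ is $2$, i.e.\ $2(\ntargets-1)$ extra Toffolis per block rather than $\ntargets-1$. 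Each of those two groups is promoted to a multi-target Toffoli of cost $2\ntargets+4$, so the per-block overhead is $2\left[(2\ntargets+4)-6\right]=4(\ntargets-1)$, and the total excess is $4\cdot 4(\ntargets-1)=16(\ntargets-1)$, which is how the paper counts it. Your $8\cdot 2(\ntargets-1)$ reaches the same number only because the two miscounts compensate; with either one corrected in isolation you would obtain $8(\ntargets-1)$ or $32(\ntargets-1)$. The remark that $n\geq 8$ serves to keep each $C^{k_j}X$ within the validity range of Lemma~\ref{lem:mcx-multi-target} is fine.
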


\begin{proof}
    Consider the gates $\prod_{i=1}^{\ntargets} C^k W_{\target_i}$. For the first target $C^k W_{\target_1}$, we can use~\cite[Theorem 3]{adenilton_su2_2023} to account for $16n - 40$ CNOTs. For the subsequent $\prod_{i=2}^{\ntargets} C^k W_{\target_i}$, each $C^k W_{\target_i}$ requires 2 $\prod_{i=2}^{\ntargets} C^{k_1} X_{\target_i}$ and 2 $\prod_{i=2}^{\ntargets} C^{k_2} X_{\target_i}$, each implemented using $4(\ntargets-1)$ CNOT gates. Therefore, the total number of CNOTs is $16n + 16(n_t - 1)- 40$.
\end{proof}

\begin{figure}[htb!]
    \centering
    \includegraphics[width=\linewidth]{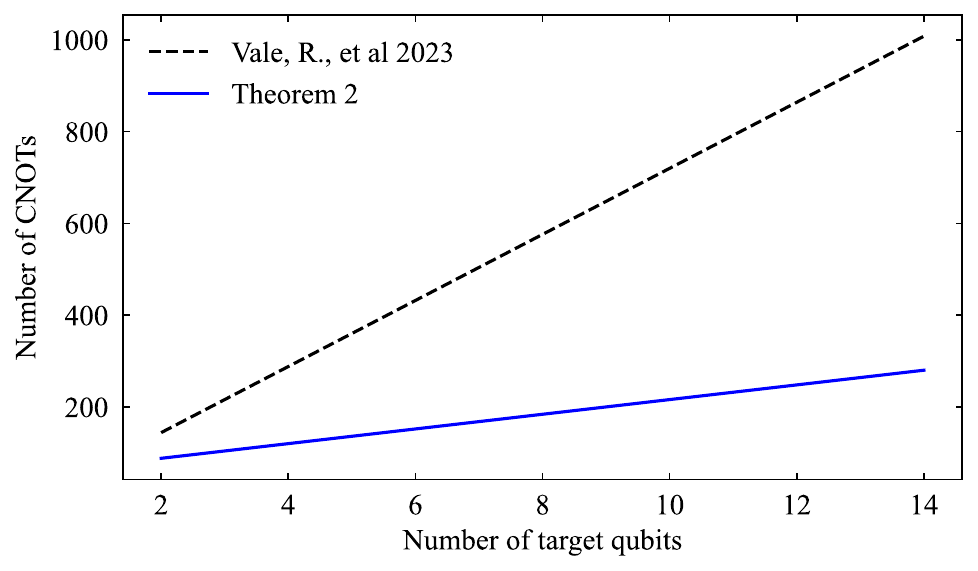}
    \caption{Analytical comparison of the number of CNOTs in the multi-controlled gate using multiple copies of the original $SU(2)$ decomposition scheme \cite{adenilton_su2_2023}, represented by the black dashed line, and the proposed multi-target version method, represented by the blue solid line. The graph displays the CNOT count for different numbers of target qubits, all with six control qubits.}
    \label{fig:su2_cascade_targets}
\end{figure}

For the original $SU(2)$ decomposition, the CNOT count for multiple targets (one multi-controlled gate for each target) would be $\ntargets(16n-40)$. Compared to the result of Theorem~\ref{theorem: multi-target_su2}, for $6$ controls or more, it is advantageous to use the multi-target optimization from Theorem~\ref{theorem: multi-target_su2} instead of applying a copy of gate on each different target. 

\figref{su2_cascade_targets} shows the comparisons of the number of CNOTs according to the number of targets of the original method~\cite{adenilton_su2_2023} (making multiple copies of the multi-controlled gate) versus the proposed multi-target method. The graphic shows the CNOT count for different numbers of target qubits, all with six control qubits.

\subsection{Optimization for general k-controlled U(2) gates}

We can further optimize our approximation for a $k$-controlled $U(2)$ gate using the results of Theorem~\ref{theorem: multi-target_su2} to reduce the number of CNOTs added by the multi-controlled $R_x$ gates.

\begin{theorem}\label{last_theorem}
    An approximated decomposition of a $C^\nctrl U$ gate, $U\in U(2)$, can be constructed up to an error $\epsilon$ with a CNOT count upper bound of $4(\nbase-1)^2 + 32n - 112$, in which $\nbase$ is a fixed base number of qubits that depends on $\epsilon$, as in Eq.~\eqref{eq:base_error} and $n\geq\nbase+8$.
\end{theorem}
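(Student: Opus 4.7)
My plan is to refine the CNOT count of Theorem~\ref{theorem: new_cnot_count} by applying the multi-target $SU(2)$ optimization of Theorem~\ref{theorem: multi-target_su2} to the multi-controlled $R_x$ gates that remain in the central column after the approximation. I would start from the circuit produced by Lemma~\ref{theorem: add_qubit}, then invoke Lemma~\ref{theorem: minimum_k} to discard the central multi-controlled $U^{1/2^{\nbase-1}}$ gate up to error $\epsilon$, exactly as in Theorem~\ref{theorem: new_cnot_count}. What remains in the central column of each of the two triangles is a sequence of $\nbase-1$ multi-controlled $R_x$ gates that share the same $\nextra+1$ control qubits $\{e_1,\dots,e_{\nextra},b_1\}$, as depicted in \figref{add_controls}; only the target qubits differ. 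This is exactly the pattern addressed by Lemma~\ref{exact_SU(2)_multi-target}, so I would replace each triangle's sequence by a single multi-target $SU(2)$ block and bound its cost through Theorem~\ref{theorem: multi-target_su2}.

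The remainder is straightforward bookkeeping. The single-controlled operators outside the central column are untouched by the optimization and still contribute the $4(\nbase-1)^2$ CNOTs identified in the proof of Theorem~\ref{theorem: new_cnot_count}. For each of the two multi-target blocks, with $\nextra+1$ controls and $\nbase-1$ target qubits, Theorem~\ref{theorem: multi-target_su2} provides an upper bound which, after substituting $\nextra=n-1-\nbase$ and simplifying, evaluates to $16n-56$ CNOTs per block. Adding the two triangles yields $32n-112$ CNOTs from the multi-target blocks, and combining with the single-controlled contribution gives the claimed bound $4(\nbase-1)^2+32n-112$. The hypothesis $n\geq\nbase+8$ comes from the conditions of Theorem~\ref{theorem: multi-target_su2} (and ultimately of \lemref{mcx-multi-target}): once the $\nextra+1$ controls and $\nbase-1$ targets are translated back into the global register, the underlying multi-target $C^k X$ decomposition requires enough free qubits to serve as dirty auxiliaries, which yields the stated lower bound.

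The main obstacle I anticipate is justifying that the $\nbase-1$ multi-controlled $R_x$ gates in each triangle can actually be fused into one multi-target block. One has to verify both that they appear consecutively on their distinct target qubits in the circuit of Lemma~\ref{theorem: add_qubit}, so that no intervening single-controlled gate obstructs the merging, and that the operators surrounding the central column do not conflict with the dirty-auxiliary requirements of the multi-target $C^{k_1} X$ and $C^{k_2} X$ decompositions invoked inside Theorem~\ref{theorem: multi-target_su2}. Once this structural claim is settled, the arithmetic leading to $4(\nbase-1)^2+32n-112$ follows by direct substitution, and the constraint $n\geq\nbase+8$ is precisely what makes the bookkeeping consistent with the preconditions of \lemref{mcx-multi-target}.
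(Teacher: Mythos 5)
Your proposal is correct and follows essentially the same route as the paper: keep the $4(\nbase-1)^2$ CNOTs from the single-controlled gates of Theorem~\ref{theorem: new_cnot_count}, and replace each triangle's column of multi-controlled $R_x$ gates (which share the $\nextra+1$ controls but have distinct targets) by one multi-target $SU(2)$ block costed via Theorem~\ref{theorem: multi-target_su2}, giving $16n-56$ per triangle and hence $32n-112$ in total, with $n\geq\nbase+8$ imposed by the preconditions of Lemma~\ref{lem:mcx-multi-target}. Your count of $\nbase-1$ targets per triangle is the one consistent with the arithmetic (the paper's prose says $\nbase-2$ gates but its formula uses $\ntargets-1=\nbase-2$, i.e.\ $\nbase-1$ targets), so your bookkeeping matches the paper's result exactly.
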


\begin{proof}
    The proof is similar to the one of Theorem~\ref{theorem: new_cnot_count}, and the difference is in the CNOT count of the multi-controlled $R_x$ gates. We have $\nbase-2$ multi-controlled $R_x \in SU(2)$ gates with the same $(\nextra+1)$ control qubits for each of the two triangles. For $n\geq\nbase+8$ we can use Theorem~\ref{theorem: multi-target_su2}, and these $\nbase-2$ multi-controlled gates can be tought as a multi-target $(\nextra+1)$-controlled $SU(2)$ gate, so they add $2[16(\nextra+2)+16(\nbase-2)-40]$ CNOTs after summing both triangles, or $32\nextra+32\nbase-80$ CNOTs after refactoring. Adding those to the $4(\nbase-1)^2$ from the multi-controlled $U^{1/2^j}$ gates and writing in terms of $n$ and $\nbase$, where $n=\nbase+\nextra+1$, the total cost of the circuit is
    \begin{equation}
        N_{\textnormal{CNOT}} = 4(\nbase-1)^2 + 32n - 112.
    \end{equation}
\end{proof}
 
With this result, we can reduce the multiplicative constant of our CNOT upper bound and make our decomposition more efficient. In \figref{u2_optim_comparison}, we compare the upper bound of CNOTs for multi-controlled $U(2)$ gates of Theorems \ref{theorem: new_cnot_count} and \ref{last_theorem} versus the original exact decomposition from Ref.~\cite{adenilton2022linear}. In this figure, it is clear that as the number of control qubits grows, Theorems \ref{theorem: new_cnot_count} and \ref{last_theorem}, which have linear CNOT costs, perform significantly better than the quadratic exact decomposition. 
We can observe that Lemma \ref{lem:mcx-multi-target} significantly reduces the CNOT cost of the approximated multi-controlled $U(2)$ gate, and gives Theorem \ref{last_theorem} a significant advantage over Theorem \ref{theorem: new_cnot_count}, which already was more efficient than the approximated decomposition from Ref.~\cite{barenco_1995}.

\begin{figure}[t]
    \centering
    \includegraphics[width=\linewidth]{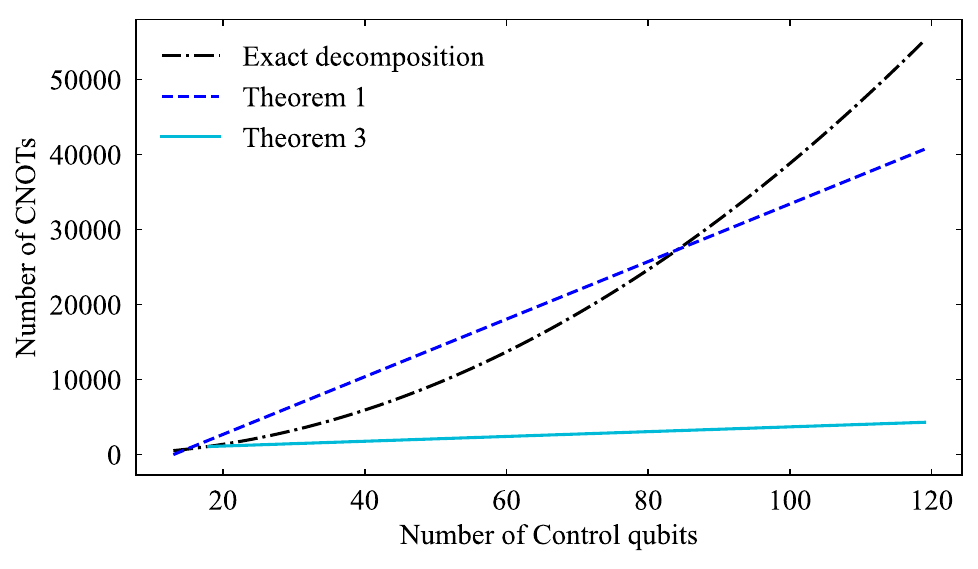}
    \caption{Comparison of the mathematical upper bound for the count of CNOTs of multi-controlled $U(2)$ gates. The exact linear $U(2)$ gate decomposition \cite{adenilton2022linear} is represented by the black dash-dot curve, the approximated decomposition without multi-target optimization is shown as the dark blue dash curve, and the light blue solid line depicts the approximated decomposition with optimization. For the approximated versions, $\nbase=13$.}
    \label{fig:u2_optim_comparison}
\end{figure}

\section{Experiments}

\label{sec:experiments}

In this section, we experimentally compare the CNOT count of our decomposition with the original multi-controlled linear-depth $U(2)$ gate decomposition~\cite{adenilton2022linear}. We consider the approximation error $\epsilon=10^{-3}$ in all experiments because this value is smaller than the current median CNOT error of IBM Mumbai Quantum Computer~\cite{ibm_mumbai}.
We experiment with a multi-controlled $X$ gate because it has an eigenvalue -1, which is the worst case for the approximated multi-controlled gate decomposition. We utilize the qiskit (version 0.43.2) and qclib (version 0.0.23) Python libraries for the experiments. After applying the proposed method, we decompose the multi-controlled gate into single qubit and CNOT gates without optimization.

\figref{cnot_count_mcx} compares the theoretical results for both the exact and the approximate decomposition to an experimental result, analyzing the decomposition CNOT cost for multi-controlled gates $C^n X$. In a numerical experiment, the proposed method and Theorem~\ref{last_theorem} assume the same value starting from $\nbase+5$ controls. Prior to entering the linear regime, where the number of controls for the multi-target multi-control $SU(2)$ gates is less than 5, we apply the multi-target Toffoli gate from Lemma~\ref{multi-target-toffoli} instead of the Lemma~\ref{lem:mcx-multi-target} for general multi-controlled $X$ gates, which results in a reduced CNOT count. 
Consequently, when compared with the exact decomposition, this approach allows us to achieve a more cost-effective CNOT operation.

\begin{figure}[ht]
     \centering
     \includegraphics[width=\linewidth]{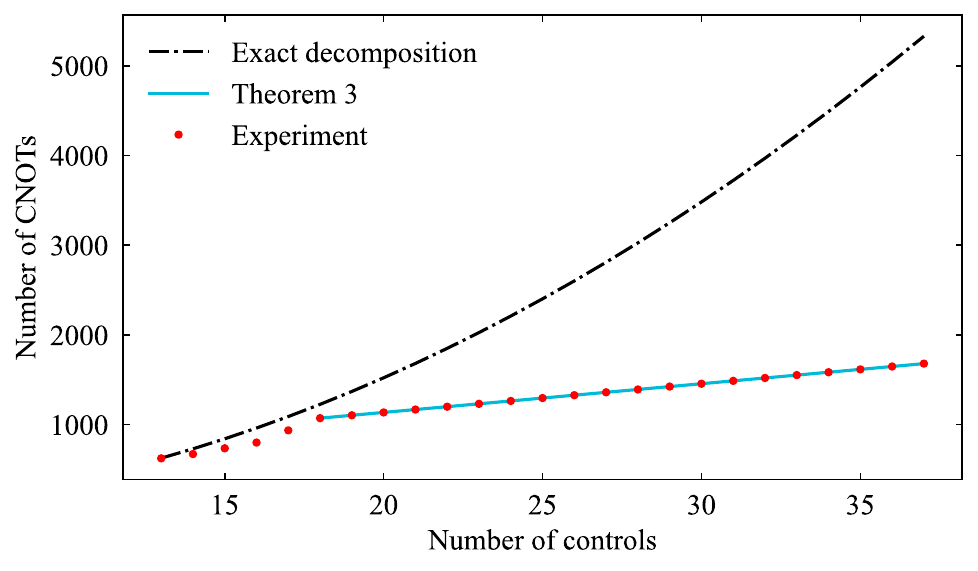}
     \caption{Comparison of Quantum Circuit Decomposition CNOT cost for multi-controlled $U=X$ gates using different methods. The light blue solid line represents the analytical cost of CNOTs as shown in Theorem~\ref{last_theorem}, the red dotted curve represents our experimental CNOT cost, and the black dashed-dotted line represents the exact decomposition~\cite{adenilton2022linear}.}
     \label{fig:cnot_count_mcx}
\end{figure}

\section{Conclusion}
\label{sec:conclusion}

In this paper, we presented a decomposition technique for approximate multi-controlled $U(2)$ gates, demonstrating a CNOT count that scales linearly with the number of qubits. The proposed methodology involves approximating the $2^{\nctrl-1}$-root of the initial $U(2)$ gate relative to the identity matrix while maintaining an approximation error $\epsilon$ that can be arbitrarily small. 

Despite achieving linear growth, Theorem~\ref{theorem: new_cnot_count} presents high multiplicative constants due to the multi-controlled $SU(2)$ gates. As a result, initially, the approximate decomposition only outperforms the original exact decomposition for a higher number of qubits, although it outperforms the approximate decomposition from Lemma 7.8 of Ref~\cite{barenco_1995}. To further reduce the CNOT cost, an optimized multi-target multi-controlled $SU(2)$ decomposition is employed (Theorem~\ref{theorem: multi-target_su2}), making the approximated method more efficient than the exact decomposition for all number of control qubits up from $\nbase$ and also more efficient than the approximated decomposition proposed in Ref.~\cite{barenco_1995}. 

Multi-controlled quantum gates are a building block of many quantum algorithms. For example, in neural networks~\cite{de2023parametrized}, decomposition of unitary operators~\cite{iten2016quantum, malvetti2021quantum, shende2006synthesis}, quantum search~\cite{grover1997quantum}, quantum RAM memories~\cite{park2019circuit}, state initialization~\cite{zhang2021low, plesch2011quantum}, and others. The proposed method will reduce the elementary gates needed to run these algorithms without auxiliary qubits. In addition to improving the computational cost of applying a multi-controlled gate, our result is also significant in the context of NISQ devices. A smaller number of operations will reduce the noise from single-qubit and CNOT gates. 

One possible future research is to explore the potential impact of introducing additional hardware-level elementary operations to minimize the cost of decomposing multi-controlled gates. Such an investigation could pave the way for quantum hardware that accommodates software components efficiently. Perhaps this line of inquiry could yield an exact decomposition method for multi-controlled $U(2)$ gates with a linear cost without the need for auxiliary qubits or approximation.

\section*{Data availability}

The sites \url{https://github.com/qclib/qclib-papers} and \url{https://github.com/qclib/qclib}~\cite{qclib} contains all the data and the software generated during the current study. 

\section*{Conflict of interest}
The authors declare no conflicts of interest.

\bibliographystyle{IEEEtran}
\bibliography{refs}

\begin{thebibliography}{10}
\providecommand{\url}[1]{#1}
\csname url@samestyle\endcsname
\providecommand{\newblock}{\relax}
\providecommand{\bibinfo}[2]{#2}
\providecommand{\BIBentrySTDinterwordspacing}{\spaceskip=0pt\relax}
\providecommand{\BIBentryALTinterwordstretchfactor}{4}
\providecommand{\BIBentryALTinterwordspacing}{\spaceskip=\fontdimen2\font plus
\BIBentryALTinterwordstretchfactor\fontdimen3\font minus
  \fontdimen4\font\relax}
\providecommand{\BIBforeignlanguage}[2]{{%
\expandafter\ifx\csname l@#1\endcsname\relax
\typeout{** WARNING: IEEEtran.bst: No hyphenation pattern has been}%
\typeout{** loaded for the language `#1'. Using the pattern for}%
\typeout{** the default language instead.}%
\else
\language=\csname l@#1\endcsname
\fi
#2}}
\providecommand{\BIBdecl}{\relax}
\BIBdecl

\bibitem{preskill2018quantum}
J.~Preskill, ``Quantum computing in the {NISQ} era and beyond,''
  \emph{Quantum}, vol.~2, p.~79, 2018.

\bibitem{kim2023evidence}
Y.~Kim, A.~Eddins, S.~Anand, K.~X. Wei, E.~Van Den~Berg, S.~Rosenblatt,
  H.~Nayfeh, Y.~Wu, M.~Zaletel, K.~Temme \emph{et~al.}, ``Evidence for the
  utility of quantum computing before fault tolerance,'' \emph{Nature}, vol.
  618, no. 7965, pp. 500--505, 2023.

\bibitem{araujo2021divide}
I.~F. Araujo, D.~K. Park, F.~Petruccione, and A.~J. da~Silva, ``A
  divide-and-conquer algorithm for quantum state preparation,''
  \emph{Scientific reports}, vol.~11, no.~1, p. 6329, 2021.

\bibitem{zhang2021low}
X.-M. Zhang, M.-H. Yung, and X.~Yuan, ``Low-depth quantum state preparation,''
  \emph{Physical Review Research}, vol.~3, no.~4, p. 043200, 2021.

\bibitem{shende2006synthesis}
V.~Shende, S.~Bullock, and I.~Markov, ``Synthesis of quantum-logic circuits,''
  \emph{IEEE Transactions on Computer-Aided Design of Integrated Circuits and
  Systems}, vol.~25, no.~6, pp. 1000--1010, 2006.

\bibitem{iten2016quantum}
R.~Iten, R.~Colbeck, I.~Kukuljan, J.~Home, and M.~Christandl, ``Quantum
  circuits for isometries,'' \emph{Physical Review A}, vol.~93, no.~3, p.
  032318, 2016.

\bibitem{malvetti2021quantum}
E.~Malvetti, R.~Iten, and R.~Colbeck, ``Quantum circuits for sparse
  isometries,'' \emph{Quantum}, vol.~5, p. 412, 2021.

\bibitem{barenco_1995}
A.~Barenco, C.~H. Bennett, R.~Cleve, D.~P. DiVincenzo, N.~Margolus, P.~Shor,
  T.~Sleator, J.~A. Smolin, and H.~Weinfurter, ``Elementary gates for quantum
  computation,'' \emph{Physical Review A}, vol.~52, pp. 3457--3467, 1995.

\bibitem{nielsen2010quantum}
M.~A. Nielsen and I.~L. Chuang, \emph{Quantum computation and quantum
  information}.\hskip 1em plus 0.5em minus 0.4em\relax Cambridge university
  press, 2010.

\bibitem{brugiere2021reducing}
T.~G.~d. Brugière, M.~Baboulin, B.~Valiron, S.~Martiel, and C.~Allouche,
  ``Reducing the depth of linear reversible quantum circuits,'' \emph{IEEE
  Transactions on Quantum Engineering}, vol.~2, pp. 1--22, 2021.

\bibitem{bae2020quantum}
J.-H. Bae, P.~M. Alsing, D.~Ahn, and W.~A. Miller, ``Quantum circuit
  optimization using quantum {K}arnaugh map,'' \emph{Scientific reports},
  vol.~10, no.~1, p. 15651, 2020.

\bibitem{cuomo2023optimized}
D.~Cuomo, M.~Caleffi, K.~Krsulich, F.~Tramonto, G.~Agliardi, E.~Prati, and
  A.~S. Cacciapuoti, ``Optimized compiler for distributed quantum computing,''
  \emph{ACM Transactions on Quantum Computing}, vol.~4, no.~2, pp. 1--29, 2023.

\bibitem{leymann2020bitter}
F.~Leymann and J.~Barzen, ``The bitter truth about gate-based quantum
  algorithms in the {NISQ} era,'' \emph{Quantum Science and Technology},
  vol.~5, no.~4, p. 044007, 2020.

\bibitem{adenilton2022linear}
A.~J. da~Silva and D.~K. Park, ``Linear-depth quantum circuits for multiqubit
  controlled gates,'' \emph{Physical Review A}, vol. 106, p. 042602, 2022.

\bibitem{he2017decompositions}
Y.~He, M.-X. Luo, E.~Zhang, H.-K. Wang, and X.-F. Wang, ``Decompositions of
  n-qubit {T}offoli gates with linear circuit complexity,'' \emph{International
  Journal of Theoretical Physics}, vol.~56, pp. 2350--2361, 2017.

\bibitem{10.1007/978-3-031-08760-8_16}
S.~Balauca and A.~Arusoaie, ``Efficient constructions for simulating multi
  controlled quantum gates,'' in \emph{Computational Science -- ICCS 2022},
  Cham, 2022, pp. 179--194.

\bibitem{adenilton_su2_2023}
\BIBentryALTinterwordspacing
R.~Vale, T.~M.~D. Azevedo, I.~C.~S. Araújo, I.~F. Araujo, and A.~J. da~Silva,
  ``Decomposition of multi-controlled special unitary single-qubit gates,''
  2023. [Online]. Available: \url{https://arxiv.org/abs/2302.06377v1}
\BIBentrySTDinterwordspacing

\bibitem{saeedi2013linear}
M.~Saeedi and M.~Pedram, ``Linear-depth quantum circuits for n-qubit {T}offoli
  gates with no ancilla,'' \emph{Physical Review A}, vol.~87, no.~6, p. 062318,
  2013.

\bibitem{maslov2016advantages}
D.~Maslov, ``Advantages of using relative-phase toffoli gates with an
  application to multiple control toffoli optimization,'' \emph{Physical Review
  A}, vol.~93, no.~2, p. 022311, 2016.

\bibitem{ibm_mumbai}
I.~Q. Experience, ``{IBM} {M}umbai,''
  \url{https://quantum-computing.ibm.com/services/resources?tab=systems&system=ibmq_mumbai},
  accessed March 14, 2023.

\bibitem{welch2014efficient}
J.~Welch, D.~Greenbaum, S.~Mostame, and A.~Aspuru-Guzik, ``Efficient quantum
  circuits for diagonal unitaries without ancillas,'' \emph{New Journal of
  Physics}, vol.~16, no.~3, p. 033040, 2014.

\bibitem{wille2013improving}
R.~Wille, M.~Soeken, C.~Otterstedt, and R.~Drechsler, ``Improving the mapping
  of reversible circuits to quantum circuits using multiple target lines,'' in
  \emph{2013 18th Asia and South Pacific Design Automation Conference
  (ASP-DAC)}.\hskip 1em plus 0.5em minus 0.4em\relax IEEE, 2013, pp. 145--150.

\bibitem{de2023parametrized}
J.~H. de~Carvalho and F.~M. de~Paula~Neto, ``Parametrized constant-depth
  quantum neuron,'' \emph{IEEE Transactions on Neural Networks and Learning
  Systems}, 2023.

\bibitem{grover1997quantum}
L.~K. Grover, ``Quantum mechanics helps in searching for a needle in a
  haystack,'' \emph{Physical review letters}, vol.~79, no.~2, p. 325, 1997.

\bibitem{park2019circuit}
D.~K. Park, F.~Petruccione, and J.-K.~K. Rhee, ``Circuit-based quantum random
  access memory for classical data,'' \emph{Scientific reports}, vol.~9, no.~1,
  p. 3949, 2019.

\bibitem{plesch2011quantum}
M.~Plesch and {\v{C}}.~Brukner, ``Quantum-state preparation with universal gate
  decompositions,'' \emph{Physical Review A}, vol.~83, no.~3, p. 032302, 2011.

\bibitem{qclib}
I.~F. Araujo, I.~C.~S. Araújo, L.~D. da~Silva, C.~Blank, and A.~J. da~Silva,
  ``{Quantum computing library},'' \url{https://github.com/qclib/qclib}, Feb.
  2023, version 0.0.21.

\end{thebibliography}
\end{document}